\documentclass[11pt]{article}
\usepackage{graphicx,amsmath,amsfonts,amssymb,dcolumn,amsthm}

\setlength{\topmargin}{-1cm} \setlength{\evensidemargin}{0cm}
\setlength{\oddsidemargin}{0cm} \setlength{\textwidth}{16cm}
\setlength{\textheight}{23cm} \setlength{\parskip}{0.2cm}

\newtheorem{thma}{Condition}[section]
\newtheorem{cor}[thma]{Requirement}

\newtheorem{stmt}[thma]{Statement}

\begin{document}

\title{\textbf{On the elimination of infinitesimal Gribov ambiguities in non-Abelian gauge theories}}
\author{\textbf{Ant\^onio D.~Pereira Jr.}\thanks{aduarte@if.uff.br} \  and \  \textbf{Rodrigo F.~Sobreiro}\thanks{sobreiro@if.uff.br}\\\\
\textit{{\small UFF $-$ Universidade Federal Fluminense,}}\\
\textit{{\small Instituto de F\'{\i}sica, Campus da Praia Vermelha,}}\\
\textit{{\small Avenida General Milton Tavares de Souza s/n, 24210-346,}}\\
\textit{{\small Niter\'oi, RJ, Brasil.}}}
\date{}
\maketitle

\begin{abstract}
\noindent An alternative method to account for the Gribov ambiguities in gauge theories is presented. It is shown that, to eliminate Gribov ambiguities, at infinitesimal level, it is required to break the BRST symmetry in a soft manner. This can be done by introducing a suitable extra constraint that eliminates the infinitesimal Gribov copies. It is shown that the present approach is consistent with the well established known cases in the literature, \emph{i.e.}, the Landau and maximal Abelian gauges. The method is valid for gauges depending exclusively on the gauge field and is restricted to classical level. However, occasionally, we deal with quantum aspects of the technique, which are used to improve the results.
\end{abstract}

\section{Introduction}\label{INTRO}

It was shown in the seminal work by V.~N.~Gribov \cite{Gribov:1977wm} that non-Abelian gauge theories have a residual gauge symmetry that survives the standard Faddeev-Popov gauge fixing procedure \cite{Faddeev:1967fc}. This means that the gauge fixed path integral still carries redundant configurations contributing to the probabilities of physical processes. This problem is commonly known as Gribov ambiguities and the spurious configurations are called Gribov copies. Remarkably, the Gribov problem is only relevant at low energies, being negligible at the ultraviolet sector. Also in \cite{Gribov:1977wm}, it was argued that, to eliminate these ambiguities at the Landau gauge, one should truncate the range of integration in the path integral by introducing a suitable nonlocal term associated with the ghost propagator, see also \cite{Sobreiro:2005ec}. In fact, it was shown that infinitesimal Gribov copies are associated with the zero modes of the Faddeev-Popov operator, or equivalently, to the poles of the ghost propagator \cite{Gribov:1977wm,Sobreiro:2005ec}. The respective truncated domain of integration in the functional space is called the first Gribov region and is achieved by the introduction of a suitable non-local term associated with the ghost propagator. This region is not entirely free from copies, however, at least the infinitesimal ones are eliminated. Moreover, all configurations outside the first Gribov horizon are copies of configurations inside the horizon, see for instance \cite{Dell'Antonio:1991xt}. The result is that the propagators of the theory are dramatically modified. The gluon propagator acquires imaginary poles and is suppressed at the infrared regime. The pole is identified with a mass parameter called Gribov parameter which is determined by a self-consistent gap equation. The ghost propagator on the other hand is found to be enhanced with a behavior of $\sim1/k^4$, where $k$ is the momentum. These results are often interpreted as confinement evidence. It is important to recall that the Gribov ambiguities are inherent to the topology of non-Abelian gauge theories, and does not depend on the gauge choice \cite{Singer:1978dk} in order to infect the perturbative path integral. Still in \cite{Singer:1978dk}, it was shown that the main problem lies on the fact that Yang-Mills theories are formally constructed over an infinite-dimensional nontrivial principal bundle \cite{Kobayashi,Daniel:1979ez,CottaRamusino:1985ad,Falqui:1985iu,Nakahara:1990th,Bertlmann:1996xk} which implies that no global section can be defined and, thus, no global gauge fixing is possible.

Although present at any gauge choice, the treatment of this problem depends on the gauge choice, resulting in different effects for each treatable gauge fixing. In fact, until now, only two renormalizable gauges are known to be manageable within the Gribov ambiguities elimination, namely, the Landau \cite{Gribov:1977wm,Sobreiro:2005ec,Zwanziger:1989mf,Dell'Antonio:1989jn,Dell'Antonio:1991xt,Zwanziger:1992qr,Maggiore:1993wq,Dudal:2005na,Dudal:2008sp,Dudal:2011gd} and the maximal Abelian gauges \cite{Capri:2005tj,Capri:2006cz,Capri:2008ak,Capri:2008vk,Capri:2010an}. In particular, local and renormalizable actions that take into account the elimination of a considerable amount of Gribov copies were developed \cite{Zwanziger:1989mf,Maggiore:1993wq,Capri:2006cz}. This type of actions are commonly known as Gribov-Zwanziger actions. Moreover, extra effects such as the condensation of local composite operators were also taken into account within the Gribov-Zwanziger framework \cite{Dudal:2005na,Capri:2008ak}. In particular, the so-called refined Gribov-Zwanziger approach \cite{Dudal:2008sp,Dudal:2011gd}, where several condensates are taken into account at the Landau gauge, has remarkable agreement with lattice numerical simulations \cite{Bogolubsky:2007ud,Cucchieri:2007md,Sternbeck:2007ug,Cucchieri:2009zt}. In a refined treatment the gluon propagator also acquires complex poles, but is finite at zero momentum, while the ghost propagator maintains his perturbative typical behavior $\sim1/k^2$.

In the case of the Landau gauge, the Gribov ambiguities treatment was related to a BRST soft symmetry breaking \cite{Baulieu:2008fy,Baulieu:2009xr}. It was shown that the effects of the Gribov ambiguities elimination can be formally understood as a BRST breaking term which is proportional to a mass parameter (the Gribov parameter \cite{Gribov:1977wm}). The presence of the mass in front of a quadratic term on mixed fields is responsible to affect only the infrared sector. This is a good feature because the ultraviolet regime is preserved, as it should be.

In this work we develop a new method to eliminate Gribov infinitesimal copies in any gauge that depends exclusively on the gauge field, \emph{i.e.}, the gauge condition can be written as $\Delta(A)=0$. Essentially, the method is to implement an extra constraint that eliminates the zero modes of the Faddeev-Popov operator, \emph{i.e.}, the constraint ruins the Gribov copies equation. The technique is based on a homotopy between infinitesimal gauge transformations and BRST transformations and on the fact that the Faddeev-Popov operator is directly related to the Gribov copies. In fact, since Gribov copies equation is obtained from the invariance of the gauge fixing under infinitesimal gauge transformations, and due to the fact that the homotopic relation above mentioned has the same form for BRST or gauge transformations with respect to the field $A_\mu^A$, the copies equation can be derived from the BRST invariance of the gauge fixing. Since the copies equation is a zero mode equation, to eliminate the ambiguities we demand that no zero modes can be allowed. It is also shown that, to eliminate the zero modes, a BRST symmetry breaking is required. It is then argued that this breaking must be soft in order to preserve the ultraviolet sector. To do so, we develop a set of relatively simple rules which ensures that infinitesimal copies are eliminated from the path integral. Although different from the (refined) Gribov-Zwanziger approach and the BRST soft breaking mechanism, the present method has some similarities with both methods. In particular, a set of auxiliary fields are introduced and are recognized as the Gribov-Zwanziger fields. Also, in order to preserve the ultraviolet limit and renormalizability, BRST soft symmetry breaking arguments are used. Even though a few quantum arguments are used, the method remains at classical level and the resulting action is classical and free of infinitesimal Gribov copies. The final improved action must be determined by renormalizability arguments within the algebraic renormalization framework \cite{Piguet:1995er}.

It is worth mention that, in the usual Gribov-Zwanziger approaches, the hermiticity of the Faddeev-Popov operator is essential to the elimination of Gribov ambiguities. From the study of its eigenvalues, it is possible to establish a geometrical meaning of the first Gribov region in the gluon configuration space, see for instance \cite{Sobreiro:2005ec}. In the present case, instead of this geometric appeal, we propose a direct elimination of the ambiguities, by ruining the copies equation. Therefore, hermiticity does not play a relevant role. This enlarges the applicability of the method. As a consistence check, we apply the method to the Landau and maximal Abelian gauges and recover the usual results. 

Another interesting feature that the present method implies is that the gap equation that dynamically fixes the Gribov parameter can be modified. Because of the introduction of the mass parameter, the general action is allowed to carry extra massive terms. Thus, the refined term is already present, just like any other term allowed by the Ward identities of the specific gauge choice (for instance, at the Landau gauge, a massive term $A^A_\mu A^A_\mu$ could be present). The difference is that all of these terms are proportional to the Gribov mass parameter, which is the reason why the gap equation is modified. If we think of the Landau gauge, or the maximal Abelian gauge, the usual gap equation actually throws the theory right at the first Gribov horizon. It happens that this is precisely the place where the infinitesimal copies actually live. As suggested in the refined approach \cite{Vandersickel:2011zc} and lattice results \cite{Cucchieri:2011ig}, a deformation of the horizon would avoid such apparent inconsistency. This is exactly what the modified gap equation does here, however, from the very beginning of the construction instead of from dynamical effects.

Finally, we would like to point out that, recently, an interesting alternative method to account for the Gribov ambiguities has also been developed \cite{Serreau:2012cg}. In this work, the authors claim that a renormalizable action free of Landau pole is obtained and the propagators are also mass dependent. Essentially, instead of eliminating the Gribov ambiguities, it is taken an average over them with a suitable weight, just as suggested by Gribov at his original work \cite{Gribov:1977wm}. Another alternative technique was developed in \cite{vanBaal:1991zw}, in Hamiltonian formalism. It is also worth mention some recent work \cite{Slavnov:2008xz,Quadri:2010vt} that also goes in this direction, where the authors argue that a class of algebraic gauges are free of Gribov copies, leading to a renormalizable action. Essentially, the reason for these gauges do not develop Gribov copies is that the gauge condition does not involve a differential operator. In order to deal with Lorentz invariance in algebraic gauges, a modified formulation of Yang-Mills theory is necessary.

This article is organized as follows: In Sect.~\ref{smmry} we provide a summary of the rules that have to be followed in order to eliminate the infinitesimal Gribov ambiguities from a gauge theory. In Sect.~\ref{GENERAL} we introduce the basic ideas concerning Yang-Mills theories, BRST quantization and the geometrical aspects of non-Abelian gauge theories. Then, in Sect.~\ref{GRIBOV} we briefly review the Gribov problem and state three important properties that are crucial for the method. The method itself is discussed and developed in details in Sect.~\ref{SBRSTSB}. In Sect.~\ref{Qaspects} we discuss some quantum aspects of the model through universal Ward identities and how we can achieve the well-known gap equation with this approach. This method allows the introduction of terms which generalize the usual gap equation and this is constructed in Sect.~\ref{Gengape}. Finally, in Sect.~\ref{tests} we apply the method to two well established cases for consistency checks: the Landau and maximal Abelian gauges.

\section{Summary of rules to eliminate infinitesimal Gribov ambiguities}\label{smmry}

Before we begin with formal discussions, since we will develop a method to eliminate infinitesimal Gribov copies, let us provide a short set of rules that should be followed in order to construct an action free of infinitesimal copies. Just to emphasize, the method presented in this work is applicable only to exclusively $A$-dependent gauges. The rules are:

\begin{enumerate}

\item \emph{Choose and fix a gauge $\Delta^A(A)=0$ through the BRST quantization method. Call the resultant perturbative action by $S_0$.}

\item \emph{Impose BRST invariance of the gauge condition and use the gauge choice, obtain the copies equation and find the Gribov operator\footnote{The Gribov operator is the Faddeev-Popov operator with the gauge condition employed.} $\nabla^{AB}$.}

\item \emph{Define a BRST quartet formed by $(\overline{\omega},\omega,\overline{\varphi},\varphi)^{IA}$, where $I$ is a composite index describing the degeneracy of the Gribov operator.}

\item \emph{Add to $S_0$ a trivial term}

\begin{equation}
S_{triv} = s \int \overline{\omega}^{IA}\nabla^{AB}\varphi^{IB} = \int [\overline{\varphi}^{IA}\nabla^{AB}\varphi^{IB} - \overline{\omega}^{IA}\nabla^{AB}\omega^{IB} - \overline{\omega}^{IA}(s\nabla)^{AB}\varphi^{IB}] \;.
\end{equation}

\noindent \emph{So the equation of motion for $\overline{\varphi}$ reproduces the copies equation.}

\item \emph{Add a soft BRST breaking to $S_0 + S_{triv}$ of the type}

\begin{equation}
\Xi = \gamma^{2}g\int d^{4}xf^{ABC}A^{C}_{\mu}(\overline{\varphi} + \varphi)^{AB}_{\mu}\;,
\end{equation}

\noindent \emph{which will be the responsible to ruin the copies equation. The mass parameter $\gamma$ is the usual Gribov parameter. The final action given by $S_0 + S_{triv} + \Xi$ is free of infinitesimal Gribov copies}.

\end{enumerate}

With this, we have a complete set of rules to deal with infinitesimal Gribov copies, at least classically. In this work we do not present all details of the quantum aspects and we make just qualitative considerations. Two very important quantum aspects are: the \textit{renormalizability} of the model and the \textit{gap equation}. So, we can extend our set of rules to the quantum aspects with

\begin{enumerate}

\item \emph{Restore the BRST symmetry by the introduction of a suitable set of trivial sources (Zwanziger sources) so the action, which is free of copies, is embedded in a more general BRST invariant theory. With suitable physical values of the sources, the original action is recovered.}

\item \emph{Generalize all Ward identities and study the renormalizability of the action.}

\item \emph{Demand that the quantum action has minimal dependence on the mass parameter $\gamma$ and obtain the gap equation whose solution fixes this parameter.}
\end{enumerate}

\section{Yang-Mills theories and BRST symmetry}\label{GENERAL}

\subsection{Preliminary definitions}

We start formulating a generic gauge theory over a $d$-dimensional Euclidean spacetime $\mathbb{R}^d$ where $d\in\{2,3,4\}$. The gauge group is denoted by $G$, which is a semi-simple Lie group. The group elements are defined by $U=\exp\left(g\zeta\right)$ where $g$ is the coupling parameter, and the matrix $\zeta$ is an algebra-valued quantity, $\zeta=\zeta^A\Lambda^A$. Uppercase Latin indices run as $\{A,B,\ldots,H\}\in\{1,2,\ldots,\dim G\}$. The anti-hermitian matrices $\Lambda_A$ are the group generators. The respective Lie algebra is $[\Lambda^A,\Lambda^B]=f^{ABC}\Lambda^C$,
where $f^{ABC}$ are the skew-symmetric structure constants. Moreover, a Killing metric $\mathrm{Tr}(\Lambda^A\Lambda^B)=-\delta^{AB}/2$ is assumed. Besides the structure constants and the Killing metric, depending on the group, extra group invariant tensors may be necessary.

The fundamental field of a gauge theory is the algebra-valued gauge connection, $A_\mu=A^A_\mu\Lambda^A$. The connection transforms under the action of the gauge group as
\begin{equation}
A_\mu\longmapsto U^{-1}\left(\frac{1}{g}\partial_\mu+A_\mu\right)U\;.\label{gt1}
\end{equation}
At infinitesimal level, transformation \eqref{gt1} reduces to
\begin{eqnarray}
A_\mu^A\longmapsto A_\mu^A+D_\mu^{AB}\zeta^B\;,\label{gt2}
\end{eqnarray}
where the covariant derivative is defined as $D_\mu=\delta^{AB}\partial_\mu-gf^{ABC}A_\mu^C$ and $\zeta$ is the infinitesimal gauge parameter. From the covariant derivative one can easily compute the curvature, or field strength, $[D_\mu,D_\nu]^{AB}=-gf^{ABC}F^C_{\mu\nu}$, where,
\begin{equation}
F_{\mu\nu}^A=\partial_\mu A^A_\nu-\partial_\nu A^A_\mu+gf^{ABC}A^B_\mu A^C_\nu\;.\label{fs1}
\end{equation}
Thus, by demanding (i) locality, (ii) power counting renormalizability, (iii) Lorentz and (iv) gauge invariance and (v) that no parameters other than $g$ are present, the action that is encountered is the Yang-Mills action,
\begin{equation}
S_{\mathrm{YM}}=\frac{1}{4}\int d^4x F^A_{\mu\nu}F^A_{\mu\nu}\;.\label{act1}
\end{equation}

\subsection{Geometrical structure of Yang-Mills theories}

The gauge connection and field strength are mathematical structures which are formally originated on the basis of fibre bundle theory \cite{Singer:1978dk, Kobayashi, Daniel:1979ez,CottaRamusino:1985ad,Nakahara:1990th}. We define the principal bundle ${G(x)}\equiv\{G,\mathbb{R}^d\}$. The principal bundle $G(x)$ has as typical fibre, and structure group, the Lie group $G$ while the base space is, in general, the $d$-dimensional differential manifold $\mathbb{R}^d$, usually the spacetime. The total space $G(x)$ is a product between the topological spaces $G$ and $\mathbb{R}^d$. The definition of the principal bundle ${G(x)}$ is a formal manner to describe the localization of the Lie group $G$ over $\mathbb{R}^d$, assembling to each point in $\mathbb{R}^d$ a different value for the elements of $G$. The space ${G(x)}$ is then the local Lie group $G$.

The gauge connection rises on the definition of parallel transport in the total space ${G(x)}$. In fact, the gauge field components are associated with the connection 1-form $A=A_\mu dx^\mu$ while the gauge transformations are associated with coordinates changing of the total space with fixed base space coordinates.  In such transformation, which corresponds to a translation along a fixed fiber originated at $x\in\mathbb{R}^d$, the connection transforms according to \eqref{gt1}. Moreover, to every 1-form connection $A$, there is a curvature 2-form defined as $F=\mathrm{d}A+gAA=F_{\mu\nu}dx^\mu dx^\nu$, where $\mathrm{d}$ is the nilpotent exterior derivative.

The gauge connection $A$ is a natural consequence of the existence of the principal bundle $G(x)$. However, it does not belong to its former definition. The complete description of gauge theories follows from the product between ${G(x)}$ and the space of all independent connections $A_o$ that can be defined on $G(x)$, \emph{i.e.}, those connections that are not related to each other by a gauge transformation of the type \eqref{gt1}. Thus, we define the universal bundle $Y_G\equiv\{{G(x)},\mathcal{A}\}$, where the fiber and structure group are both the local Lie group ${G(x)}$, while the base space $\mathcal{A}$ is the space of all independent algebra-valued gauge connections $A_o$, the so-called moduli space. The total space $Y_G$ is then a nontrivial product between $G(x)$ and $\mathcal{A}$. The definition of a gauge orbit is obtained from \eqref{gt1} by considering a field $A_o(x)\in\mathcal{A}$ and all of its possible gauge transformations,
\begin{equation}
A=U^{-1}\left(\frac{1}{g}\mathrm{d}+A_o\right)U\;,\label{go1}
\end{equation}
which is exactly the fiber originated at $A_o(x)$. Thus, the total space $Y_G$ can be understood as the union of all gauge orbits and, hence, as the space of all possible gauge configurations that are originated in $G(x)$.

It turns out that $A$ is not the unique fundamental structure in $Y_G$. To see this, one can take the exterior derivative at the total space $Y_G$, namely $\delta$, and restrict it to the direction of a gauge orbit \eqref{go1}, $\delta|_{\mathrm{fibre}}=s$. It follows that \cite{Nakahara:1990th}
\begin{equation}
sA^{A}_\mu=-D_\mu^{AB}c^B\;,\label{brs1}
\end{equation}
where $s$ is recognized as the nilpotent BRST operator and $c=-U^{-1}sU$ is the Faddeev-Popov ghost field, which characterizes the presence of the local group as a dynamical quantity. Its BRST transformation is easily obtained,
\begin{equation}
sc^A=\frac{g}{2}f^{ABC}c^Bc^C\;.\label{brs2}
\end{equation}
The ghost field is recognized as the Maurer-Cartan 1-form in group space while \eqref{brs2} is the corresponding structure equation. The immediate consequence is the ``wrong'' statistics of the ghost field. It is a 0-form in spacetime and thus is a spin-0 field. However, it is a 1-form in group space (Because the BRST operator is an exterior derivative in group space, it increases by one the form rank in group space). The form rank in group space is known as the ghost number and must be a conserved quantity, otherwise a quantum number associated with an abstract non-physical space would be observed.

\subsection{BRST symmetry and gauge fixing}

To quantize a theory defined in $Y_G$, the introduction of a constraint is required. This constraint should select only one representative of each equivalence class described by each gauge orbit. Obviously, this is nothing else than the gauge fixing which enforces the gauge symmetry breaking so each configuration is taken into account only once at the path integral. The simplest way to do this is to define a section over $Y_G$. On the other hand, the BRST quantization method consists in study the cohomology of the BRST operator when this constraint is imposed to the Yang-Mills action \eqref{act1}. To do so, is convenient to introduce a pair of 0-form algebra-valued fields, forming a BRST doublet,
\begin{eqnarray}
s\overline{c}^A&=&ib^A\;\nonumber\\
sb^A&=&0\;,\label{brs3}
\end{eqnarray}
where $b$ is the Lautrup-Nakanishi field playing the role of a Lagrange multiplier and $\overline{c}$ is recognized as the Faddeev-Popov anti-ghost field. Ever since $b$ is a Lagrange multiplier it has vanishing ghost number, which enforces the anti-ghost to have ghost number $-1$. Thus, the ghost is now allowed to be part of the dynamics since the anti-ghost can compensate its ghost number, see Table \ref{table1}. In fact, to find the most general action depending on $A$, $c$, $\overline{c}$ and $b$ is a cohomology problem for $s$ modulo $\mathrm{d}$ and the extra usual requirements: (i) locality; (ii) Lorentz invariance and; (iii) power counting renormalizability\footnote{Global gauge invariance is also required. However, this symmetry can be broken in certain gauges, for instance, the maximal Abelian gauge.}. The solution is \cite{Piguet:1995er}
\begin{equation}
S_0=S_{\mathrm{YM}}+s\int d^4x\;\overline{c}^A\Delta_A+S_{\mathrm{ext}}\;,\label{act2}
\end{equation}
where $\Delta$ is a local functional with vanishing ghost number and dimension $\kappa$ which will define the gauge fixing constraint. In order to preserve power counting renormalizability it is demanded that $\kappa\le d$. The fields $b$ and $\overline{c}$, as any BRST doublet, remain at the trivial sector of the cohomology \cite{Piguet:1995er} which enforces $c$ to be at trivial sector as well. The last term in \eqref{act2} depends on BRST invariant external fields which are introduced to account for the non-linearity of the BRST transformations
\begin{equation}
S_{\mathrm{ext}}=s\int d^4x\left(-\Omega^A_\mu A^A_\mu+L^Ac^A\right)=\int d^4x\left(-\Omega^A_\mu D^{AB}_\mu c^B+\frac{g}{2}f^{ABC}L^Ac^Bc^C\right)\;.\label{act3}
\end{equation}
Moreover, for simplicity, we work exclusively in pure connection gauges $\Delta=\Delta(A)$, \emph{i.e.}, gauge fixing terms that contain only the gauge field. For potential gauges, the second term on the \emph{rhs} of \eqref{act2} is the gauge fixing term and reads
\begin{equation}
S_{\mathrm{gf}}=\int d^4x (ib^A\Delta_A-\overline{c}^As\Delta_A)=\int d^4x \left(ib^A\Delta_A+\overline{c}^A\frac{\delta\Delta_A}{\delta A^B_\mu}D^{BC}_\mu c^C\right)\;,\label{act4}
\end{equation}
and the gauge fixing is formally obtained by
\begin{equation}
\frac{\delta S_0}{\delta b^A}=i\Delta_A=0\;,\label{gf1}
\end{equation}

The external term allows to write the BRST operator in a functional form, which is compatible with the quantum action principle:
\begin{equation}
s=\int d^4x\left(\frac{\delta S_0}{\delta\Omega^A_\mu}\frac{\delta }{\delta A^A_\mu}+\frac{\delta S_0}{\delta L^A}\frac{\delta }{\delta c^A}+ib^A\frac{\delta}{\delta\overline{c}^A}\right)\;,\;\;\; s S_0=0\label{st1}
\end{equation}
from which we deduce the classical commutation relations, valid for any functional,
\begin{eqnarray}
\left[\frac{\delta}{\delta b^A},s\right]&=&i\frac{\delta}{\delta \overline{c}^A}\;,\nonumber\\
\left\{\frac{\delta}{\delta\overline{c}^A},s\right\}&=&0\;.\label{com1}
\end{eqnarray}

The action \eqref{act2} can now be employed in the definition of a path integral. Obviously, the renormalizability of the theory must be studied for each fixed gauge \cite{Piguet:1995er}. It turns out, however, that the gauge fixing does not ensure the complete breaking of gauge symmetry \cite{Gribov:1977wm}, a problem inherent to non-Abelian gauge theories constructed over non-trivial principal bundles $Y_G$ \cite{Singer:1978dk}. In fact, a residual gauge symmetry survives the gauge fixing process, ruining a complete quantum description for Yang-Mills theories, as briefly discussed at Sect.~\ref{INTRO}. We shall discuss in the next sections how we can control Gribov ambiguities, at least to a certain level.

\begin{table}[t]
\centering
\begin{tabular}{|c|c|c|c|c|c|c|c|c|}
\hline
fields & $A$ & $\Delta$ & $b$ & $c$ & $\bar{c}$ & $\Omega$ & $L$ & $g$ \\ \hline
Dimension & $1$ & $\kappa$ & $4-\kappa$ & $0$ & $4-\kappa$ & $3$ & $4$ & $0$\\
Ghost number & 0 & 0 & 0 & 1 & $-1$ & $-1$ & $-2$ & 0 \\ \hline
\end{tabular}
\caption{Quantum numbers of the fields. The BRST operator has ghost number $1$ and is chosen to be of dimension $0$.}
\label{table1}
\end{table}

\section{Gribov ambiguities}\label{GRIBOV}

The gauge fixing \eqref{gf1} is actually the case of the well-known gauges where Gribov ambiguity can be relatively handled, such as the Landau \cite{Gribov:1977wm,Sobreiro:2005ec,Zwanziger:1989mf,Dudal:2011gd} and the maximal Abelian \cite{Capri:2005tj,Capri:2006cz,Capri:2008ak,Capri:2008vk,Capri:2010an} gauges. Other gauges are also in this class, such as the physical gauges. A typical example of these gauges is the Coulomb \cite{Zwanziger:1998ez}, where Gribov ambiguities can be partially treated yet the renormalizability of these gauges remain unproved.

Gribov ambiguities arise when we consider two gauge configurations, $A_\mu$ and $A_\mu^\prime$, both belonging to the same equivalence class, \emph{i.e.}, they are related through a gauge transformation \eqref{gt1}. Thus, demanding that both configurations obey the same gauge fixing, one finds the usual Gribov copies equation \cite{Gribov:1977wm},
\begin{equation}
\Delta^A(A_\mu+U^{-1}D_\mu U)=0\;.\label{gribov1}
\end{equation}
For each independent configuration $A$, the solutions of \eqref{gribov1} for $U$ determine all possible redundant gauge configurations. Thus, there is still a residual gauge symmetry after gauge fixing. Interestingly, the residual gauge symmetry is a subgroup (the residual group) of the gauge group and the proof is a simple exercise in group theory.

At infinitesimal level equation \eqref{gribov1} reduces to
\begin{equation}
\nabla^A_{\phantom{A}B}\zeta^B=0\;,\label{gribov2}
\end{equation}
where
\begin{equation}
\nabla^A_{\phantom{A}B}=\frac{\partial\Delta^A(A)}{\partial A_\mu^C}D_\mu^{CB}\;,\label{gribov3}
\end{equation}
Equation \eqref{gribov2} is the infinitesimal Gribov copies equation. The infinitesimal Gribov equation can also be derived from the requirement of gauge invariance of the gauge condition,
\begin{equation}
\delta_g\Delta^A=\nabla^A_{\phantom{A}B}\zeta^B=0\;,\label{gribov4}
\end{equation}

\subsection{Three important properties}

We now provide some useful statements concerning the infinitesimal Gribov ambiguities and its relation to BRST transformations.

\begin{stmt}\label{homot}
BRST and infinitesimal gauge transformations are homotopic.
\end{stmt}

\begin{proof} The BRST operator is the exterior derivative in $Y_G$ along a typical fibre \eqref{go1}, $\delta|_{\mathrm{fibre}}=s$, thus it measures how a quantity varies along the fibre. Notwithstanding, to move along the fibre is the same as to perform a gauge transformation, since each fibre is a gauge orbit. Thus, it should be possible to identify the BRST transformations with infinitesimal gauge transformations. This can be done through a homotopy relation between both transformations.

In $Y_G$, let us denote the topological space generated by a gauge orbit at a point $x\in\mathbb{R}^d$ by $Y_x$. Then, infinitesimal gauge transformations in $Y_x$ are maps $\delta_x: Y_x\longmapsto Y_x$. Moreover, $\delta_x$ affects all other algebra-valued fields defined in $Y_x$. Thus, a point in $Y_x$ can regarded as a quartet $\Phi=(A,c,\overline{c},b)$ and a gauge transformation maps this point into another point $\Phi^\prime=(A,c,\overline{c},b)^\prime$, while maintaining $x$ fixed. The map $\delta_x$ acts at a point $\Phi$ as $\delta_x:\Phi\longmapsto\Phi+\delta_g\Phi$, where $\delta_g$ is the generator of the infinitesimal gauge transformation. Explicitly,
\begin{eqnarray}
A^A_\mu&\longmapsto&A^A_\mu+D^{AB}_\mu \zeta^B\;,\nonumber\\
c^A&\longmapsto&c^A+gf^{ABC}\zeta^Bc^C\;,\nonumber\\
\overline{c}^A&\longmapsto&\overline{c}^A+gf^{ABC}\zeta^B\overline{c}^C\;,\nonumber\\
b^A&\longmapsto&b^A+gf^{ABC}\zeta^Bb^C\;,\label{gt3}
\end{eqnarray}
where $\zeta$ is an infinitesimal gauge parameter. On the other hand, BRST transformations, at the same point $x$, are maps $s_x: Y_x\longmapsto Y_x$. Thus, $s_x:\Phi\longmapsto\Phi+\epsilon s\Phi$, where $s$ is the usual BRST operator and $\epsilon$ is a global Grassmann parameter. Explicitly, the action of $s_x$ at a point $\Phi$ is given by
\begin{eqnarray}
A^A_\mu&\longmapsto&A^A_\mu-\epsilon D^{AB}_\mu c^B\;,\nonumber\\
c^A&\longmapsto&c^A+\frac{g}{2}\epsilon f^{ABC}c^Bc^C\;,\nonumber\\
\overline{c}^A&\longmapsto&\overline{c}^A+\epsilon ib^A\;,\nonumber\\
b^A&\longmapsto&b^A\;,\label{gt3}
\end{eqnarray}

Defining the quantity $\alpha^A_t=t\zeta^A-(1-t)\epsilon c^A$, where $t\in[0,1]$, we have
\begin{eqnarray}
\delta_t&=&D^{AB}_\mu\alpha^B_t\frac{\delta}{\delta A^A_\mu}+\left[t-\frac{g}{2}(1-t)\right]f^{ABC}\alpha^B_tc^C\frac{\delta}{\delta c^A}+\left[gtf^{ABC}\zeta^B\overline{c}^C+(1-t)\epsilon ib^A\right]\frac{\delta}{\delta\overline{c}^A}+\nonumber\\
&+&gtf^{ABC}\zeta^Bb^C\frac{\delta}{\delta b^A}\;.\label{gt4}
\end{eqnarray}
Then, the map $f(t):Y_x\times[0,1]\longmapsto Y_x$, such that $\Phi\longmapsto\Phi+\delta_t\Phi$, satisfies $\delta_0=\epsilon s$ and $\delta_1=\delta_g$. Moreover, since $f$ is obviously continuous, the operator $\delta_t$ continuously deforms gauge transformations in BRST transformations and vice-versa. The proof extends to the entire space $Y_G$. Consequently, \emph{BRST and infinitesimal gauge transformations are homotopic}.
\end{proof}

It is important to keep in mind that the homotopy between BRST and infinitesimal gauge transformations has a deep geometrical meaning. In fact, in what concerns the gauge and ghost fields, this homotopy means that when we perform a BRST transformation, we are also performing a translation along a gauge orbit. With respect to the BRST doublet $(\overline{c}^A, b^A)$, the homotopy does not exactly describes a formal equivalence between a gauge transformation and a BRST transformation, only that one can be continuously deformed into another. The hypothesis of a gauge fixing that depends exclusively on $A$ plays a fundamental role here. If this was not the choice, we would have to deal with fields that do not have the same formal BRST/gauge transformations. The homotopy explained inhere may be a suggestion for a generalization of this for gauges with other fields dependence than $A_\mu^A$. However, we leave this generalization to future investigation.

An obvious consequence of this result is that, since it is valid for all infinitesimal gauge transformations, it is valid also for the residual infinitesimal gauge transformations of Gribov ambiguities. Thus, a direct consequence of the homotopy is that, if we want to eliminate the residual gauge symmetry, we have to break the BRST symmetry (see statement \ref{brs_break}). 

\begin{stmt}\label{gfp}
The operator $\nabla^A_{\phantom{A}B}$ coincides with the Faddeev-Popov operator arising from the very same gauge fixing.
\end{stmt}

\begin{proof}
This is evident from the comparison of \eqref{gribov2}, \eqref{gribov3} and \eqref{act4}. Equation \eqref{gribov2} was obtained from the gauge invariance of the gauge fixing and, since $\delta_g$ and $s$ are homotopic (statement \ref{homot}) and also that the action of $\delta_g$ and $s$ are formally the same in exclusively $A$-dependent quantities, we should also obtain \eqref{gribov2} from the requirement that the gauge fixing \eqref{gf1} is BRST invariant. To show this, we compute $s\Delta^A$ and employ the first of \eqref{com1} and the first equality in \eqref{gf1},
\begin{equation}
-\frac{\delta S_0}{\delta\overline{c}^A}=s\Delta^A\;.\label{gf0a}
\end{equation}
Thus, imposing BRST invariance of the gauge fixing, we obtain
\begin{equation}
\nabla^{AB}c_B=0\;,\label{gf0b}
\end{equation}
as we wanted to show.
\end{proof}

This small, and almost trivial, statement implies that the infinitesimal Gribov copies equation follows from the BRST variation of the gauge fixing, in accordance with \eqref{gribov2}. We also remark that expression \eqref{gf0b} is an off-shell relation obtained from the imposition of BRST invariance, we have not imposed $\Delta^A=0$. However, since Gribov copies concern configurations that obey the gauge fixing, it is natural to impose $\Delta^A=0$. As so, from now on, we will call the operator $\nabla^{AB}$ as Faddeev-Popov operator when we do not impose the gauge condition while when imposing it we will call it Gribov operator. In practice this difference should not matter because the gauge fixed path integral is supposed to take into account only configurations that obey the gauge condition.

\begin{stmt}\label{brs_break}
BRST symmetry breaking
\end{stmt}

\begin{proof}
The third property follows immediately from the second property \ref{gfp}: \emph{the absence of Faddeev-Popov zero modes requires the breaking of BRST symmetry}: This is evident from Eq.~\eqref{gf0b} (or \eqref{gribov2}), which expresses that infinitesimal Gribov copies occur when the Faddeev-Popov operator acquire zero modes. This property can be obtained from the gauge invariance of the gauge fixing or, equivalently, from BRST invariance of the gauge fixing. Thus to eliminate the residual gauge freedom, BRST symmetry must be broken.
\end{proof}

As will be discussed in the next sections, this breaking will be a soft breaking. This property will ensure the safeness of the ultraviolet sector of Yang-Mills theories.

\section{Soft BRST symmetry breaking}\label{SBRSTSB}

The homotopy between BRST and infinitesimal gauge transformations (statement \ref{homot}) allowed an identification between the infinitesimal Gribov copies and the BRST invariance of the model (statement \ref{gfp}). These two features allowed the conclusion that a BRST symmetry breaking is required to get rid of infinitesimal Gribov ambiguities (statement \ref{brs_break}). We now employ these three properties in order to eliminate the infinitesimal Gribov copies.

The BRST breaking that has to be introduced is not arbitrary, it must ensure that the Faddeev-Popov operator does not produce zero modes. The method we adopt to do this is to consider a second constraint, \emph{i.e.}, a kind of extra gauge fixing. It is an important requirement that this extra term drops out at the UV regime, otherwise it could affect the well established sector of QCD, \emph{i.e.}, the perturbative Yang-Mills action \eqref{act2} must be recovered at the UV sector. This requirement means that, at the high energy regime, the BRST symmetry must be restored \cite{Baulieu:2008fy,Baulieu:2009xr}. This is also clear from the analysis of the poles of the ghost propagator, originaly discussed by Gribov himself \cite{Gribov:1977wm} (Se also \cite{Sobreiro:2005ec,Dudal:2005na}).

A method that can be employed that ensures the ultraviolet sector is not affected is the soft BRST breaking technique \cite{Baulieu:2008fy,Baulieu:2009xr}. The method presented in this section is quite similar to that of the soft BRST symmetry breaking \cite{Baulieu:2008fy,Baulieu:2009xr}. The difference lies on the fact that, in \cite{Baulieu:2008fy,Baulieu:2009xr}, the motivation is based on the Gribov-Zwanziger action and the fitting of the propagators with the lattice data. Here, the motivation is to eliminate directly infinitesimal Gribov ambiguities in an almost\footnote{Although the method here developed can be applied to a class of gauges, the final result always depend on the original gauge constraint.} gauge fixing independent way. Moreover, the technique here developed is applied at the classical level. Thus, we can opt for a minimal change with respect to the original perturbative action \eqref{act2}. As it will become evident, this requirement is enough to reproduce the Gribov-Zwanziger BRST breaking term. On the other hand, if we require a more generalized soft term, one can achieve an improved action and a generalized gap equation will emerge (See Sect.\ref{gpeq}).

As mentioned, the idea is relatively simple. We only have to introduce an extra constraint at the Yang-Mills perturbative action \eqref{act2} in such a way that: 

\begin{cor}\label{req1}
The Faddeev-Popov operator does not develop zero modes;
\end{cor}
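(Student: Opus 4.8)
To establish Requirement~\ref{req1} I would exhibit explicitly the extra constraint summarised in Rules~3--5 of Sect.~\ref{smmry} and show that, on the support of the resulting path integral, the infinitesimal copies equation \eqref{gribov2} ceases to be a genuine zero-mode equation. Concretely the plan is: (i) promote the copies equation to an equation of motion by adding a BRST-exact term $S_{triv}$; (ii) deform that equation of motion by a soft breaking $\Xi$; (iii) argue that the deformed, \emph{inhomogeneous} equation is consistent only on the domain where the Gribov operator $\nabla^{AB}$ is invertible.

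First I would introduce the BRST quartet $(\overline{\omega},\omega,\overline{\varphi},\varphi)^{IA}$, with $A$ the index on which $\nabla^{AB}$ acts and $I$ the degeneracy multi-index (Lorentz plus internal) chosen so that the soft term $\Xi$ has exactly the index structure of the source produced below. Adding $S_{triv}=s\int\overline{\omega}^{IA}\nabla^{AB}\varphi^{IB}$ does not alter the physical content of $S_0$: it is BRST-exact, hence in the trivial sector of the cohomology of $s$, and $(\overline{\omega},\overline{\varphi})$ and $(\varphi,\omega)$ are BRST doublets, just as $(\overline{c},b)$ is below \eqref{act3}. Its only effect is that the equation of motion $\delta(S_0+S_{triv})/\delta\overline{\varphi}^{IA}=\nabla^{AB}\varphi^{IB}=0$ is, for each $I$, a copy of \eqref{gribov2} --- now on shell, with $\varphi$ in the role of the residual gauge parameter. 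At this stage nothing is gained, since $\varphi=0$ solves it and zero modes of $\nabla^{AB}$ are still permitted.

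The decisive step is the soft breaking $\Xi=\gamma^{2}g\int d^{4}x\,f^{ABC}A^{C}_{\mu}(\overline{\varphi}+\varphi)^{AB}_{\mu}$. Since $\Xi$ is not BRST-exact it breaks the BRST symmetry, as Statement~\ref{brs_break} requires; and since the dimensionful $\gamma^{2}$ multiplies a product of fields of total dimension less than four, the breaking is \emph{soft}, so that the perturbative action \eqref{act2} is recovered for momenta $k^{2}\gg\gamma^{2}$ and the ultraviolet sector is untouched --- which is precisely what makes Requirement~\ref{req1} compatible with keeping the known physics intact. With $\Xi$ present, the equation of motion for $\overline{\varphi}^{IA}$ becomes the inhomogeneous, ``ruined'' copies equation
\[ \nabla^{AB}\varphi^{IB}=-\gamma^{2}\,j^{IA}(A)\;, \]
where $j^{IA}(A)$ is the nonvanishing $A$-dependent source read off from $\Xi$ (essentially $g\,f^{ABC}A^{C}_{\mu}$, reproducing the Zwanziger term in the Landau case). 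This fixes $\varphi=-\gamma^{2}(\nabla^{-1}j)(A)$, so a configuration $A$ survives in the path integral only where $\nabla^{AB}$ is invertible. Conversely, if $\nabla^{AB}$ had a zero mode, contracting the equation with the corresponding left null vector $\eta_{A}$ (so that $\eta_{A}\nabla^{AB}=0$) would give $0=-\gamma^{2}\int d^{4}x\,\eta_{A}\,j^{IA}(A)$, which fails for a generic $A$; such configurations are therefore excluded, exactly as restricting to the interior of the first Gribov region removes the zero-mode configurations in the Gribov--Zwanziger construction. Hence on the support of $S_{0}+S_{triv}+\Xi$ the Faddeev--Popov operator develops no zero modes, which is the content of Requirement~\ref{req1}.

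The step I expect to be the main obstacle is the last implication: turning ``the inhomogeneous equation of motion forbids zero modes'' into a controlled statement about the support of the functional integral, and doing so \emph{without} assuming $\nabla^{AB}$ hermitian, so that the geometric eigenvalue/horizon analysis (as in \cite{Sobreiro:2005ec}) is replaced by the purely algebraic left-kernel argument above. One also has to verify that the universal soft term $\Xi$ really does generate, for an arbitrary exclusively $A$-dependent gauge, a source $j^{IA}(A)$ matching the inhomogeneity of the copies equation, and that the auxiliary fields can always be assigned a degeneracy index $I$ with the required structure; this is where the explicit consistency checks with the Landau and maximal Abelian gauges do most of the work.
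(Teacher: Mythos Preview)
Your proposal is correct and follows essentially the same route as the paper: Requirement~\ref{req1} is stated there as a design criterion (with no proof environment of its own), and the paper fulfils it by exactly the construction you describe --- the trivial term $S_{triv}$ of Sect.~\ref{trv} plus the soft breaking $\Xi$ of Sect.~\ref{brk} --- culminating in the inhomogeneous $\overline{\varphi}$ equation of motion \eqref{break2}, which the authors simply assert ``directly ensures the no zero mode condition for the entire connection functional space.'' Your left-null-vector contraction and the honest caveats about non-hermitian $\nabla$ actually go a bit beyond the paper; the paper's own justification is the informal classical-field-equation analysis of Intermission~1 (Sect.~\ref{int1}), where it is argued that whenever a zero mode appears the constraint forces $A=0$, so the only surviving zero-mode configuration is the harmless perturbative vacuum.
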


\begin{cor}\label{req2}
BRST symmetry is broken;
\end{cor}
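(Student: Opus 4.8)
The statement to establish is that the extra term we are about to add meets Requirements~\ref{req1} and~\ref{req2} at once. The plan is to realise this by a minimal, controlled deformation of the perturbative action~\eqref{act2}, along the lines anticipated in Sect.~\ref{smmry}. First I would enlarge the field content by a BRST quartet $(\overline{\omega},\omega,\overline{\varphi},\varphi)^{IA}$, the composite index $I$ running over the degeneracy of the Gribov operator $\nabla^{AB}$ of~\eqref{gribov3}, with $s\overline{\omega}^{IA}=\overline{\varphi}^{IA}$, $s\overline{\varphi}^{IA}=0$, $s\varphi^{IA}=\omega^{IA}$ and $s\omega^{IA}=0$, so that $(\overline{\omega},\overline{\varphi})$ and $(\varphi,\omega)$ are trivial BRST doublets. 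I would then add the $s$-exact term $S_{triv}=s\int\overline{\omega}^{IA}\nabla^{AB}\varphi^{IB}$, which by construction preserves the BRST invariance~\eqref{st1} and whose $\overline{\varphi}^{IA}$ equation of motion is $\nabla^{AB}\varphi^{IB}=0$, i.e.\ precisely the infinitesimal copies equation~\eqref{gribov2}/\eqref{gf0b} with $\varphi^{IB}$ playing the role of the zero mode $\zeta^{B}$ (equivalently $c^{B}$, by the homotopy of Statement~\ref{homot}).

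The core of the argument is the addition of the extra constraint
\begin{equation}
\Xi=\gamma^{2}g\int d^{4}x\,f^{ABC}A^{C}_{\mu}(\overline{\varphi}+\varphi)^{AB}_{\mu}\;.
\end{equation}
To verify Requirement~\ref{req1} I would recompute the $\overline{\varphi}^{IA}$ equation of motion for the full action $S_0+S_{triv}+\Xi$: it now reads, schematically, $\nabla^{AB}\varphi^{IB}+\gamma^{2}g\,fA=0$, an \emph{inhomogeneous} equation whose right-hand side is a fixed background built out of $A$ alone and the Gribov parameter $\gamma$. Hence $\varphi$ is uniquely determined and cannot support a non-trivial kernel; the copies equation has been ruined and $\nabla^{AB}$ develops no zero modes on the constrained configuration space. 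To verify Requirement~\ref{req2} I would compute $s\Xi$ using $sA^{C}_{\mu}=-D^{CD}_{\mu}c^{D}$ together with $s\varphi=\omega$ and $s\overline{\varphi}=0$: the outcome is non-zero and is neither $s$-exact nor vanishing on-shell, so the BRST symmetry is genuinely broken by $\Xi$. Finally I would check that the breaking is soft: the operator multiplying $\gamma^{2}$ has ultraviolet dimension two, so $\Xi$ is suppressed in the deep ultraviolet and the perturbative action~\eqref{act2} is recovered there, exactly as it must be.

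The step I expect to be the main obstacle is bridging the gap between the equation-of-motion statement ``$\varphi$ has no flat direction'' and the genuine functional-integral statement ``no infinitesimal Gribov copies survive'': one must check that the inhomogeneous constraint really restricts the integration to a region where $\nabla^{AB}$ is invertible, rather than merely shifting the auxiliary field by a solution of the inhomogeneous equation, and that enlarging the field content by the quartet does not reintroduce a new residual (gauge or BRST) degeneracy through the auxiliary sector. A secondary subtlety is fixing the precise contraction structure and the meaning of the degeneracy index $I$ for a general $A$-dependent constraint $\Delta^{A}(A)$, which has to be done case by case. The softness and renormalizability claims are only argued at this stage and will have to be underpinned later by the Ward-identity and algebraic-renormalization analysis of Sect.~\ref{Qaspects}.
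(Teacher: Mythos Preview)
Your proposal is correct and follows essentially the same construction as the paper (Sects.~\ref{trv}--\ref{brk}): introduce the BRST quartet, add the $s$-exact piece $S_{triv}$, then add $\Xi$ and verify that the $\overline{\varphi}$ equation of motion becomes the inhomogeneous no-zero-mode condition (Eq.~\eqref{break2}) and that $sS\neq 0$ (Eq.~\eqref{break4}), with softness following from the mass dimension of $\gamma$. The only minor omission is the pure vacuum piece $\varepsilon\gamma^{z}$ in $\Xi$ (Eq.~\eqref{break1}), which plays no role for Requirements~\ref{req1}--\ref{req2}, and the conceptual caveat you flag about passing from the equation-of-motion statement to the path-integral statement is precisely what the paper addresses in its Intermissions (Sects.~\ref{int1} and~\ref{int2}).
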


\begin{cor}\label{req3}
The ultraviolet sector is not affected; so the BRST breaking is a soft breaking.
\end{cor}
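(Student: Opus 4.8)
The plan is to realise the BRST breaking forced by Statement~\ref{brs_break} in the most economical way, so that Requirement~\ref{req3} reduces to a power-counting statement. First I would enlarge the field content by a BRST quartet $(\overline{\omega},\omega,\overline{\varphi},\varphi)^{IA}$, with $(\overline{\varphi},\varphi)$ commuting and $(\overline{\omega},\omega)$ anticommuting, assigning them the same quantum numbers as the Gribov--Zwanziger localising fields (in particular mass dimension $1$ in $d=4$), and add to the perturbative action \eqref{act2} the $s$-exact term $S_{triv}=s\int\overline{\omega}^{IA}\nabla^{AB}\varphi^{IB}$. Being $s$-exact, $S_{triv}$ does not break BRST, it keeps the whole quartet confined to the trivial sector of the cohomology, and --- crucially for Requirements~\ref{req1} and~\ref{req2} --- its $\overline{\varphi}$ equation of motion reproduces \emph{off-shell} the zero-mode equation $\nabla^{AB}\varphi^{IB}=0$, i.e.\ the copies equation \eqref{gf0b}. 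The genuine breaking is then introduced as $\Xi=\gamma^{2}g\int d^{4}x\,f^{ABC}A^{C}_{\mu}(\overline{\varphi}+\varphi)^{AB}_{\mu}$, and the entire content of Requirement~\ref{req3} is the claim that this $\Xi$ is \emph{soft}.

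To prove softness I would argue on two levels. At tree level: the only place the massive parameter $\gamma$ enters is $\Xi$, and there it multiplies the local composite $gf^{ABC}A^{C}_{\mu}(\overline{\varphi}+\varphi)^{AB}_{\mu}$ of mass dimension $1+1=2<4=d$, so the breaking $s\Xi$ is a sum of terms of the form $\gamma^{2}\times(\text{operator of dimension }2)$ --- precisely the defining property of a soft breaking. One then reads off the quadratic form of $S_0+S_{triv}+\Xi$: in the deep ultraviolet the auxiliary propagators behave as $\langle\varphi\,\overline{\varphi}\rangle\sim 1/k^{2}$ while the $A$--$\varphi$ mixing entry is $\propto\gamma^{2}$, so inverting the kinetic matrix the gluon two-point function acquires a $\gamma$-dependent piece $\sim\gamma^{4}/k^{6}$, a correction of relative order $\gamma^{4}/k^{4}\to0$; more generally every $\gamma$-dependent contribution to an amputated Green function comes dressed with extra inverse powers of $k^{2}$. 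Consequently, in the regime $k\gg\gamma$ (equivalently $\gamma\to0$ at fixed momentum) one recovers exactly the perturbative action \eqref{act2}, the auxiliary quartet decoupling as a closed, BRST-trivial sector, so BRST is restored in the ultraviolet and the established ultraviolet behaviour of Yang--Mills is untouched. On the quantum level the statement should be upgraded to: by the standard algebraic-renormalization results on theories with soft breakings~\cite{Piguet:1995er,Baulieu:2008fy,Baulieu:2009xr}, $\Xi$ generates no ultraviolet divergences beyond those already present in $S_0$, so renormalizability is not compromised.

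The main obstacle, I expect, is this last (quantum) step rather than the tree-level estimate. Since $\nabla^{AB}=(\partial\Delta^{A}/\partial A^{C}_{\mu})D^{CB}_{\mu}$ carries the gauge field, $S_{triv}$ hides interaction vertices --- in particular the dimension-$d$ vertex $\overline{\omega}^{IA}(s\nabla)^{AB}\varphi^{IB}$ containing a ghost --- and one must show that loops built from these vertices cannot generate \emph{hard} (dimension $\ge d$) BRST-breaking operators multiplied by positive powers of $\gamma^{2}$, which would reinstate $\gamma$-dependence in the ultraviolet. Establishing this requires the algebraic-renormalization machinery proper: promoting $\gamma^{2}$ to a set of trivial (Zwanziger) sources so that the extended action is exactly BRST invariant, deriving the full set of Ward identities, and proving that the physical limit of the sources is stable under renormalization --- precisely the programme sketched as the quantum rules of Sect.~\ref{smmry}. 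It is also at this stage that the admissible dimensions of the quartet, $[\varphi]=[\overline{\varphi}]=1$, get fixed so that $\Xi$ lands exactly at dimension $d$ and the whole construction is internally consistent only for $d\le4$, in agreement with the standing hypothesis $d\in\{2,3,4\}$.
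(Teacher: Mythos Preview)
Your argument is sound, but you should be aware that in the paper this item is a \emph{Requirement} --- a design constraint imposed on the extra constraint term --- not a theorem carrying its own proof block. The paper simply lists Requirements~\ref{req1}--\ref{req3} as conditions the breaking must satisfy, then constructs $\Xi$ in Sect.~\ref{brk} through the further Requirements~\ref{req1a}--\ref{req3a}. The point you are really proving appears in the paper only \emph{a posteriori}: after deriving the form \eqref{break1}, the authors observe that the dimensional analysis (the fields' quantum numbers together with linearity in $A$ and $\overline{\varphi}$) forces a mass parameter $\gamma$ of strictly positive dimension, so the breaking is automatically soft and ``Requirement~\ref{req3} is, actually, redundant.'' That single sentence, plus the remark that $\gamma\to0$ restores $S_0$ via \eqref{idexp}, is the entirety of the paper's justification.

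Your proposal is therefore a correct but considerably more elaborate version of the same reasoning. The dimensional count you give (operator of dimension $2<4$) is exactly the paper's argument; your tree-level propagator estimate and the discussion of potential hard counterterms generated by the $\overline{\omega}(s\nabla)\varphi$ vertex go beyond what the paper does at this stage. The paper defers all such quantum issues to Sect.~\ref{Qaspects}, where it introduces the Zwanziger sources \eqref{brs8}--\eqref{rest2} to restore BRST invariance and control renormalization --- precisely the machinery you anticipate needing. So nothing in your plan is wrong; you have simply front-loaded the renormalization discussion and supplied explicit UV asymptotics that the paper leaves implicit in the phrase ``soft breaking.'' If you want to match the paper's logic more closely, the cleanest presentation is: state Requirement~\ref{req3} as a constraint, construct $\Xi$ from Requirements~\ref{req1a}--\ref{req3a}, and then remark that the unavoidable mass parameter makes Requirement~\ref{req3} automatic.
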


\subsection{Trivial auxiliary fields}\label{trv}

The first step is to mimic the Gribov copies equation \eqref{gribov4}, which is nothing but a zero mode equation for the Faddeev-Popov operator, with the help of extra auxiliary fields. It could be easier perhaps to use the ghost field equation itself and then break the BRST symmetry by the introduction of suitable terms. However, we do not want to mess with the already well established ultraviolet sector, where the Faddeev-Popov fields play an important role on the cancelation of non-physical terms. Thus, a modification at the Faddeev-Popov sector would possibly spoil Requirement \ref{req3} as well as some of the sacred symmetries of the starting action, such as the Faddeev-Popov discrete symmetry.

The set of auxiliary fields is introduced as follows: From the fact that the infinitesimal Gribov copies equation carries $d\dim{G}$ degrees of degeneracy\footnote{We call attention to the fact that we are considering the general case where the gauge fixing is the same for all sector of the gauge group algebra. It turns out that, the generalization to gauges where different sectors of the algebra have different gauge constraints is not difficult. The example of the maximal Abelian gauge will be discussed at Sect.\ref{MAG}.} \cite{Zwanziger:1992qr} one should introduce a Lagrange multiplier with $d\dim{G}$ independent degrees of freedom. This field must then carry an independent composite index $I\equiv(\mu,B)$ describing this degeneracy and an extra group index. The extra group index allows the coupling with the Faddeev-Popov operator. The composite index runs then as $I,J...\in\{1,2,\ldots,d\dim{G}\}$. We denote this field by $\overline{\varphi}^{AB}_\mu$. Defining also its BRST doublet counterpart\footnote{Since we are running out of geometrical fields which naturally arise in $Y_G$, the only way to introduce extra fields with no reflection at the UV sector is to consider BRST doublets.}, we can write
\begin{eqnarray}
s\overline{\omega}^{AB}_\mu&=&\overline{\varphi}^{AB}_\mu\;,\nonumber\\
s\overline{\varphi}^{AB}_\mu&=&0\;,\label{brs5}
\end{eqnarray}
where $\overline{\omega}^{AB}_\mu$ plays the role of the anti-ghost while $\overline{\varphi}^{AB}_\mu$ is the Lautrup-Nakanishi analogue. Thus, choosing $\omega^{AB}_\mu$ to mimic the ghost field in \eqref{gf0b} and $\varphi^{AB}_\mu$ to be its BRST doublet pair,
\begin{eqnarray}
s\varphi^{AB}_\mu&=&\omega^{AB}_\mu\;,\nonumber\\
s\omega^{AB}_\mu&=&0\;,\label{brs6}
\end{eqnarray}
we have a BRST quartet system that, for now, should not affect the non-trivial cohomology of Yang-Mills theories \cite{Baulieu:2008fy,Baulieu:2009xr,Piguet:1995er}. We remark that the introduction of the second doublet pair \eqref{brs6} is unavoidable to maintain the Faddeev-Popov discrete symmetry. Thus, a trivial term can be introduced,
\begin{equation}
S_0\longmapsto S_\mathrm{G}=S_0+S_{triv}\;,
\end{equation} 
where
\begin{eqnarray}
S_{triv}&=& s\int d^4x\;\overline{\omega}^{AC}_\mu\nabla^{AB} \varphi^{BC}_\mu\; \nonumber\\
&=&  \int d^4x\left[\overline{\varphi}^{AC}_\mu\nabla^{AB} \varphi^{BC}_\mu-\overline{\omega}^{AC}_\mu\nabla^{AB} \omega^{BC}_\mu+\overline{\omega}^{AC}_\mu \left(D^{DE}_{\nu}c^E\right)\frac{\delta\nabla^{AB}}{\delta A_\nu^D}\varphi^{BC}_\mu\right]\;.\label{gz0}
\end{eqnarray}
From the transformations \eqref{brs5} and \eqref{brs6} it is not difficult to infer the quantum number of the extra fields, see Table \ref{table2}. This term is trivial not only because it lies on the trivial cohomology, but also because it can be eliminated from the path integral after the change of variables \cite{Zwanziger:1992qr},
\begin{equation}
\omega^{AB}_\mu\longrightarrow\omega^{AB}_\mu-\left(\nabla^{-1}\right)^{AC}(sA^{D}_{\nu})\frac{\delta\nabla^{CE}}{\delta A^D_\nu}\varphi^{EB}_\mu\;,
\end{equation}
and the fact that
\begin{equation}
\int [\overline{\varphi}\varphi\overline{\omega}\omega]\exp\left\{-\int d^4x\left(\overline{\varphi}^{AC}_\mu\nabla^{AB}\varphi^{BC}_\mu-\overline{\omega}^{AC}_\mu\nabla^{AB}\omega^{BC}_\mu\right)\right\}=1\;,\label{idexp}
\end{equation}
It is important to understand that, in expression \eqref{gz0}, the operator $\nabla$ was introduced on purpose, in such a way that the field equation for $\overline{\varphi}$ mimics the infinitesimal Gribov copies equation \eqref{gribov4}. It also characterizes the zero modes of the Faddeev-Popov operator in an independent sector with respect to the perturbative action. This is an important property because the ghost sector cannot be directly affected since we do not want to change the ultraviolet sector (requirement \ref{req3}).

\begin{table}[t]
\centering
\begin{tabular}{|c|c|c|c|c|}
\hline
fields & $\overline{\varphi}$ & $\varphi$ & $\overline{\omega}$ & $\omega$ \\ \hline
Dimension & $(4-\kappa)/2$ & $(4-\kappa)/2$ & $(4-\kappa)/2$ & $(4-\kappa)/2$\\
Ghost number & 0 & 0 & $-1$ & 1 \\ 
$Q_{d\dim{G}}$ charge & $-1$ & 1 & $-1$ & 1 \\ \hline
\end{tabular}
\caption{Quantum numbers of the auxiliary fields. The ordinary Yang-Mills fields carry vanishing $Q_{d\dim{G}}$ charge.}
\label{table2}
\end{table}

The action $S_\mathrm{G}$ displays extra symmetries that are very useful for the renormalizability problem. In fact, an extra quantum charge (see Table \ref{table2}) can be defined from
\begin{eqnarray}
Q_{IJ}S_\mathrm{G}&=&0\;,\nonumber\\
Q_{IJ}&=&\int d^4x\left(\varphi^A_I\frac{\delta}{\delta\varphi^A_J}+\omega^A_I\frac{\delta}{\delta\omega^A_J}-\overline{\varphi}^A_J\frac{\delta}{\delta\overline{\varphi}^A_I}-\overline{\omega}^A_J\frac{\delta}{\delta\overline{\omega}^A_I}\right)\;,\label{q1}
\end{eqnarray}
where the trace $Q_{II}$ characterizes a $U(d\dim{G})$ symmetry that defines the $Q_{d\dim G}$ charge. With this simplification in mind we can also define a rigid supersymmetry
\begin{eqnarray}
R_{IJ}S_\mathrm{G}&=&0\;,\nonumber\\
R_{IJ}&=&\int d^4x\left(\varphi^A_I\frac{\delta}{\delta\omega^A_J}-\overline{\omega}^A_J\frac{\delta}{\delta\overline{\varphi}^A_I}\right)\;,\label{r1}
\end{eqnarray}
and a symmetry that relates the auxiliary fields and the Faddeev-Popov ghosts
\begin{eqnarray}
T_{I}S_\mathrm{G}&=&0\;,\nonumber\\
T_{I}&=&\int d^4x\left(c^A\frac{\delta}{\delta\omega^A_I}+\overline{\omega}^A_I\frac{\delta}{\delta\overline{c}^A}\right)\;.\label{t1}
\end{eqnarray}
In a sense, symmetry \eqref{t1}, together with the Faddeev-Popov discrete symmetry, establishes an equivalence between the Faddeev-Popov ghosts and the auxiliary fields. This is a very welcome feature because the auxiliary fields must mimic the Faddeev-Popov ghost fields.

\subsection{The breaking term}\label{brk}

The next step is to include the breaking term in order to enforce that $\nabla^{AB}\varphi_{BC}\ne0$. It is immediate that the constraint term must be of the general form
\begin{equation}
S_{\mathrm{GZ}}=S_{triv}+\Xi\;,\label{gz1}
\end{equation}
where $\Xi$ is a real integrated local term, so the action can generate, at least classically, the no zero mode condition through
\begin{equation}
\frac{\delta(S_0+S_{\mathrm{GZ}})}{\delta\overline{\varphi}^{AC}_\mu}=\nabla^{AB}\varphi^{BC}_\mu+\frac{\delta\Xi}{\delta\overline{\varphi}^{AC}_\mu}=0\;.
\label{ncopies}
\end{equation}
To encounter the explicit form of $\Xi$ we demand it to obey as many symmetries of the perturbative action \eqref{act2} as possible except, obviously, the BRST symmetry, which is demanded to be violated.  Moreover, it can contain many terms. However, to break the BRST symmetry, one term is enough and we opt for a minimal change with respect to the perturbative action. We determine this term as follows:

\begin{cor}\label{req1a}
$\Xi$ must have a term which depends on $\overline{\varphi}_\mu^{AB}$.
\end{cor}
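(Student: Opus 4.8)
The plan is to prove this as a necessity statement by \emph{reductio ad absurdum}: I would assume that $\Xi$ contains no term built out of $\overline{\varphi}^{AB}_\mu$ and show that this is incompatible with the stated role of $\Xi$, namely that it must ruin the infinitesimal Gribov copies equation (Requirements \ref{req1} and \ref{req2}). The starting point is equation \eqref{ncopies}: the field equation obtained by varying $S_0+S_{\mathrm{GZ}}$ with respect to $\overline{\varphi}^{AC}_\mu$ reads $\nabla^{AB}\varphi^{BC}_\mu+\delta\Xi/\delta\overline{\varphi}^{AC}_\mu=0$. By construction of $S_{triv}$ — in particular the deliberate placement of $\nabla$ in \eqref{gz0} — the piece $\delta S_{triv}/\delta\overline{\varphi}^{AC}_\mu=\nabla^{AB}\varphi^{BC}_\mu$ reproduces exactly the zero-mode operator of the infinitesimal copies equation \eqref{gribov4}/\eqref{gf0b}, with $\varphi^{BC}_\mu$ in the composite index $I$ playing the role the ghost $c^B$ (equivalently the gauge parameter $\zeta^B$) plays there.

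Next I would carry out the reductio. If $\Xi$ has no $\overline{\varphi}$-dependent term, then $\delta\Xi/\delta\overline{\varphi}^{AC}_\mu\equiv 0$ identically, and \eqref{ncopies} collapses back to $\nabla^{AB}\varphi^{BC}_\mu=0$, i.e. the untouched copies equation, which still admits the nontrivial solutions associated with Faddeev--Popov zero modes. This contradicts the requirement that $\Xi$ be precisely the ingredient that forbids zero modes and breaks BRST in this sector; hence $\Xi$ must contain at least one $\overline{\varphi}$-dependent term. I would also explicitly rule out the tempting alternative that purely $\varphi$-dependent (but $\overline{\varphi}$-independent) terms in $\Xi$ might do the job: such terms modify the $\varphi$ field equation, not the $\overline{\varphi}$ field equation \eqref{ncopies}, and it is the latter that has been engineered to carry the copies equation, so genuine $\overline{\varphi}$-dependence is unavoidable.

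The step I expect to require the most care is not the functional differentiation, which is a one-line observation, but making precise the claim that it is the $\overline{\varphi}$ equation of motion — rather than some combination of equations of motion — that deserves to be called ``the copies equation.'' This rests on Statements \ref{homot} and \ref{gfp}: the homotopy between BRST and infinitesimal gauge transformations guarantees that $\nabla$ is the same operator whether one demands gauge invariance or BRST invariance of $\Delta^A$, and the structure of \eqref{gz0} mirrors \eqref{gf0b} in a sector decoupled from the perturbative action \eqref{act2}. Once this dictionary $\varphi^{BC}_\mu\leftrightarrow$ zero mode is granted, the reductio above closes immediately, so the bulk of the write-up would be devoted to spelling out that identification and only a short paragraph to the contradiction argument itself.
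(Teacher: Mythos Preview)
Your proposal is correct and follows essentially the same reasoning as the paper: the paper's own justification is the single observation that without $\overline{\varphi}$-dependence, $\Xi$ ``would be invisible to the operator $\delta/\delta\overline{\varphi}_\mu^{AB}$,'' which is exactly your reductio that \eqref{ncopies} would collapse back to the unmodified copies equation. Your additional remarks (ruling out purely $\varphi$-dependent terms, and grounding the identification of the $\overline{\varphi}$ equation with the copies equation in Statements~\ref{homot} and~\ref{gfp}) are sound elaborations, but the paper treats the point as self-evident and dispatches it in one line.
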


\begin{proof}[]
Obviously, without this requirement, $\Xi$ would be invisible to the operator $\delta/\delta\overline{\varphi}_\mu^{AB}$. 
\end{proof}

\begin{cor}\label{req2a}
The quantity $\delta\Xi/\delta\overline{\varphi}$ must have, among all possible terms, one term that depends exclusively on $A$.
\end{cor}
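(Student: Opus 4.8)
First I would read the requirement off the no--zero--mode condition \eqref{ncopies}. Writing the equation of motion for $\overline{\varphi}$ as
\begin{equation}
\nabla^{AB}\varphi^{BC}_{\mu}=-\,\frac{\delta\Xi}{\delta\overline{\varphi}^{AC}_{\mu}}\;,
\end{equation}
I would recall that, by the insertion of $\nabla$ in $S_{triv}$ and by Statement \ref{gfp}, for vanishing right-hand side this is exactly the infinitesimal Gribov copies equation \eqref{gribov4}, with the component $\varphi^{BC}_{\mu}$ playing, for each fixed value of the degeneracy index $I=(\mu,C)$, the role of the gauge parameter $\zeta^{B}$. Hence ``ruining the copies equation'' means precisely that $\delta\Xi/\delta\overline{\varphi}^{AC}_{\mu}$ must be a genuine, non-removable source on the right-hand side of a would-be zero-mode equation; Requirement \ref{req1a} already makes it nonzero, and the task is to identify from which fields it may be built.

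Next I would proceed by exclusion of the other fields available in the theory. The source cannot be a functional of $\varphi$ alone, since then the $\overline{\varphi}$ equation of motion would retain the purely homogeneous form $(\nabla+M(A))\varphi=0$, which keeps $\varphi=0$ together with a non-trivial kernel whenever $-M$ belongs to the spectrum of $\nabla$: no zero mode is eliminated and Requirement \ref{req1} fails. It cannot depend only on the auxiliary ghosts $(\omega,\overline{\omega})$, on $\overline{\varphi}$, or on the Faddeev-Popov fields $(c,\overline{c})$ either, because all of these can be consistently set to zero, and on such a configuration the equation collapses back to $\nabla\varphi=0$, so the copies equation \eqref{gribov4} is merely reproduced, not ruined; one also notes that $\omega,\overline{\omega},c,\overline{c}$ cannot even enter $\delta\Xi/\delta\overline{\varphi}$ linearly, since the latter carries ghost number $0$ by Table \ref{table2}. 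This leaves the gauge-fixing multiplier $b$ and the external sources $\Omega,L$, which are excluded by Requirement \ref{req3} and the demand of a minimal change: modifying the $b$-term would alter the gauge condition itself, while $\Omega,L$ are unphysical and switched off at the end. The only field left for the source to be built from is therefore the gauge connection $A$.

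And a source built from $A$ does the job: for generic $A$ one has $\delta\Xi/\delta\overline{\varphi}\ne0$, so on shell $\nabla^{AB}\varphi^{BC}_{\mu}=-J^{AC}_{\mu}(A)$ is no longer a zero-mode equation at all, and the homogeneous copies equation has been traded for an inhomogeneous one --- which is exactly what one expects, since the Gribov problem and the copies equation \eqref{gribov1}, \eqref{gribov4} live on the configuration space of $A$, so the term that spoils them must feed $A$ back into the $\overline{\varphi}$ equation of motion. Imposing in addition Lorentz invariance, the remaining symmetries inherited from \eqref{act2} (including the Faddeev-Popov discrete symmetry and the structures \eqref{q1}, \eqref{r1}, \eqref{t1} of Sect.~\ref{trv}), and power counting with the soft, mass-dimensional coefficient $\gamma$, the minimal such term is the one linear in $A$, $\Xi\supset\gamma^{2}g\int d^{4}x\,f^{ABC}A^{C}_{\mu}(\overline{\varphi}+\varphi)^{AB}_{\mu}$, which reproduces the Gribov-Zwanziger breaking; but the statement to be proven requires only the weaker conclusion that $\delta\Xi/\delta\overline{\varphi}$ contains, among all its terms, at least one depending exclusively on $A$.

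The main obstacle will be making the exclusion step fully tight rather than merely plausible: a complete argument that no term involving $\varphi$ or the auxiliary ghosts can ruin the copies equation needs a short case analysis of the coupled equations of motion, and the definitive statement that a pure-$A$ inhomogeneity is forced --- and that the minimal linear-in-$A$ choice survives quantization and renormalization --- ultimately rests on the universal Ward identities and the algebraic-renormalization analysis of Sect.~\ref{Qaspects}. At the classical level the statement is therefore best presented as the necessary consequence of Requirements \ref{req1}--\ref{req3} supplemented by the symmetries retained from the perturbative action \eqref{act2}, and that is the route I would take.
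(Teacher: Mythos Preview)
Your argument is correct and follows essentially the same route as the paper: both proofs rest on the observation that if $\delta\Xi/\delta\overline{\varphi}$ depended only on fields other than $A$, those fields could be set to their trivial vacua and the $\overline{\varphi}$ equation of motion would collapse back to the copies equation $\nabla\varphi=0$, whereas a pure-$A$ source forces $\nabla$ away from its zero modes except at the harmless point $A=0$. The paper states this tersely in one line (``any other field appearing in this term would generate copies at their trivial vacua'') and then draws the immediate corollary that $\Xi$ must be linear in $\overline{\varphi}$; your case-by-case exclusion is simply a more explicit unpacking of the same idea.
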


\begin{proof}[]
If this requirement is not respected, any other field appearing in this term would generate copies at their trivial vacua. Thus, in a term depending exclusively on $A_\mu^A$, the zero modes of the Gribov operator appear only at $A=0$. Nevertheless, if the trivial vacuum develops infinitesimal copies, they are necessarily different from zero and, thus, these copies will have $A_\mu^A=0$ as a copy. The consequence is that, for any vacuum copy that appears in \eqref{ncopies}, $A_\mu^A=0$ will be eliminated. Thus, to $\Xi$ be able to produce a term depending only on $A_\mu^A$, it is necessary that $\Xi$ is linear in $\overline{\varphi}_\mu^{AB}$.
\end{proof}

\begin{cor}\label{req3a}
The term $\Xi$ must be quadratic in the fields.
\end{cor}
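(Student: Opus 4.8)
The plan is to combine the structural constraints already extracted (Requirements \ref{req1a} and \ref{req2a}), the dimensional bookkeeping of Table \ref{table1} and Table \ref{table2}, and the three Requirements \ref{req1}--\ref{req3} governing the breaking term. First I would recall what is already fixed: by Requirement \ref{req2a} the term $\Xi$ must be \emph{linear} in $\overline{\varphi}_\mu^{AB}$, so we may write $\Xi = \int d^4x\;\overline{\varphi}_\mu^{AB}\,\mathcal{O}_\mu^{AB}$ up to a term built purely from $\varphi$ (which, being paired by the supersymmetry \eqref{r1} and the charge \eqref{q1}, is forced to enter symmetrically, as $\overline{\varphi}+\varphi$, but does not change the counting). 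The problem thus reduces to determining the admissible form of the local insertion $\mathcal{O}_\mu^{AB}$, and then showing that consistency forces $\Xi$ to be bilinear overall.

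The key steps, in order, are: (i) \emph{Dimension counting.} From Table \ref{table2}, $\overline{\varphi}$ has dimension $(4-\kappa)/2$, so $\mathcal{O}_\mu^{AB}$ must have dimension $4-(4-\kappa)/2 = (4+\kappa)/2$; I would argue that, since the gauge condition $\Delta^A$ has dimension $\kappa$ and the only fields available with the right quantum numbers and lowest dimension are $A_\mu^A$ (dimension $1$) and the auxiliary fields $\overline\varphi,\varphi$ (dimension $(4-\kappa)/2$), the softness Requirement \ref{req3} forbids $\mathcal{O}$ from being of dimension $4$ or higher built from high powers of fields — a soft breaking must come with a \emph{positive-mass-dimension coefficient} multiplying a term of total field-dimension strictly less than $4$, i.e. the Gribov parameter $\gamma$ enters with positive power. (ii) \emph{Using Requirement \ref{req2a} again.} Since $\delta\Xi/\delta\overline\varphi = \mathcal{O}_\mu^{AB}$ must contain a piece depending \emph{only} on $A$, and this piece must be linear (so that $A_\mu^A=0$ is its unique zero), $\mathcal{O}$ contains a term $\propto \gamma^2 g f^{ABC}A_\mu^C$, whose total field-dimension is $1$; multiplying by $\overline{\varphi}_\mu^{AB}$ of dimension $(4-\kappa)/2$ gives a field-dimension $1+(4-\kappa)/2$ term, and the leftover dimension $\kappa/2 \le 2$ is carried by the coefficient $\gamma^2$ — consistent and soft. (iii) \emph{Closing the argument.} I would then show that any additional term in $\mathcal{O}$ that raised the field count (e.g. terms cubic or quartic overall, like $f^{ABC}A_\mu^B(\overline\varphi\varphi)^{C\cdots}$ or $A_\mu A_\nu A_\rho$-type insertions contracted against $\overline\varphi$) is either non-soft (coefficient of non-positive mass dimension, violating Requirement \ref{req3}) or does not contribute to the $A$-only piece required by Requirement \ref{req2a}, hence is superfluous for the minimal construction; by the "minimal change" principle invoked in the text, these are discarded. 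What survives is bilinear in the fields (one $\overline\varphi$ or $\varphi$, one $A$), establishing the claim.

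The main obstacle I anticipate is step (iii): ruling out higher-order terms rigorously rather than by appeal to minimality. One must be careful that the power-counting argument genuinely excludes, rather than merely disfavours, a soft cubic term — the cleanest route is to observe that any term with three or more fields contracted with $\overline\varphi$ would have field-dimension $\ge 1 + (4-\kappa)/2 + (4-\kappa)/2 = \dim\overline\varphi + 1 + (4-\kappa)/2$, forcing its coefficient to have mass dimension $\kappa/2 - (4-\kappa)/2 = \kappa - 2$, which is $\le 0$ for $\kappa \le 2$ and in any case does not reproduce a $\gamma^2$-proportional soft term of the Gribov type; combined with Requirement \ref{req2a}, which demands the $A$-only content be \emph{linear}, this forces the surviving operator to be exactly $\gamma^2 g f^{ABC}A_\mu^C$ up to the $\overline\varphi \leftrightarrow \varphi$ symmetrisation. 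I would present the dimensional inequality explicitly and note that this is precisely what pins $\Xi$ to the quadratic form $\Xi = \gamma^2 g \int d^4x\, f^{ABC}A_\mu^C(\overline\varphi+\varphi)_\mu^{AB}$ announced in the Summary, thereby also serving as the consistency check that the rules of Sect.~\ref{smmry} are forced, not merely assumed.
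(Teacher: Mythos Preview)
Your proposal takes a genuinely different route from the paper, and in doing so inverts the logical flow the authors intend. The paper does not \emph{derive} quadraticity from softness and dimensional analysis; rather, it \emph{imposes} quadraticity as a requirement and justifies it on renormalizability grounds: if $\Xi$ is quadratic, then its contribution to every field equation is at most linear in the fields, and linear breakings of Ward identities are classical (they do not propagate through loops) and hence harmless for renormalization, in the sense of \cite{Piguet:1995er}. Combined with Requirement~\ref{req2a} (linearity in $\overline{\varphi}$), quadraticity then forces linearity in $A$. That is the whole argument. The paper then observes, \emph{after} imposing Requirements~\ref{req1a}--\ref{req3a}, that dimensional analysis forces a mass parameter to appear, so that Requirement~\ref{req3} (softness) comes out automatically and is ``actually redundant''. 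You are running this backwards: you take softness as input and try to squeeze quadraticity out of it via power counting and minimality.

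Your route is not wrong in spirit, but it has the gap you yourself flag in step (iii): the dimensional inequalities do not cleanly exclude cubic or higher terms without invoking the minimality principle, which is an aesthetic choice rather than a constraint. The paper sidesteps this entirely by appealing to the Ward-identity criterion, which gives a clean structural reason for quadraticity independent of dimensional bookkeeping. If you want to keep your approach, you should at least add the Ward-identity observation as the primary motivation and relegate the dimensional analysis to a consistency check; otherwise your argument rests on minimality, which the paper uses only as a secondary filter, not as the justification for Requirement~\ref{req3a} itself.
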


\begin{proof}[]
With this requirement, all renormalizable field equations (characterized by Ward identities) are automatically satisfied because the possible violating terms are linear in the fields \cite{Piguet:1995er}. Then, since $\Xi$ must be linear in $\overline{\varphi}_\mu^{AB}$ (Requirement \ref{req2a}), it must also be linear in $A_\mu^A$.
\end{proof}

Thus, from these three requirements and demanding minimal modification of the action $S_\mathrm{G}$, it follows that, due to the degrees of freedom of the fields and their dimensions, the only way to construct $\Xi$ is to introduce a mass parameter\footnote{Perhaps, a higher derivative term could be considered. We, however, opt to avoid the intricacies of dealing with such terms.}. This is exactly the Requirement \ref{req3} being realised without imposing it. It follows that, the only term fulfilling these requirements is
\begin{equation}
\Xi=\int d^4x\gamma^2D_\mu^{AB}(\varphi+\overline{\varphi})^{AB}_\mu+\int d^4x\varepsilon\gamma^z\;,\label{break1}
\end{equation}
where $\gamma$ is the mass parameter with dimension given by (see Table \ref{table2})
\begin{equation}
[\gamma]=\frac{2+\kappa}{4}\;,\label{dim1}
\end{equation}
which is positive-definite for any value of $\kappa$. The pure constant term $\gamma^z$ is also allowed by power counting if, and only if, 
\begin{equation}
z=\frac{16}{(2+\kappa)}\;,\label{dim2}
\end{equation}
is an integer\footnote{By a simple algebraic analysis of \eqref{dim2}, it is easy to find that this type of term would be present if, and only if, $\kappa=2\Rightarrow n=2,\;z=4$. At this restricted class of gauges resides the Landau gauge and the maximal Abelian gauge. However, that would exclude non-local gauges (that could be localizable by a suitable set of auxiliary fields) or other gauges such as, for instance, $\partial^2\partial_\mu A_\mu=0$. In this example, these terms would be absent and the correspondent effects would probably appear from the higher derivative intricacies.}. And the combination $(\varphi+\overline{\varphi})^{AB}_\mu$ is introduced because we demand the action to be real.

For completeness, we write the full action, composed by the Yang-Mills, gauge fixing, the extra constraint and external sources terms:
\begin{eqnarray}
S&=&\frac{1}{4}\int d^4x F^A_{\mu\nu}F^A_{\mu\nu}+\int d^4x \left(ib^A\Delta_A+\overline{c}^A\frac{\delta\Delta_A}{\delta A^B_\mu}D^{BC}_\mu c^C\right)+\nonumber\\
&+&\int d^4x\left[\overline{\varphi}^{AC}_\mu\nabla^{AB} \varphi^{BC}_\mu-\overline{\omega}^{AC}_\mu\nabla^{AB} \omega^{BC}_\mu+\overline{\omega}^{AC}_\mu \left(D^{DE}_{\nu}c^E\right)\frac{\delta\nabla^{AB}}{\delta A_\nu^D}\varphi^{BC}_\mu\right]+\nonumber\\
&+&\int d^4x\gamma^2D_\mu^{AB}(\varphi+\overline{\varphi})^{AB}_\mu+\int d^4x\varepsilon\gamma^z+\int d^4x\left(-\Omega^A_\mu D^{AB}_\mu c^B+\frac{g}{2}f^{ABC}L^Ac^Bc^C\right)\;.\nonumber\\
\label{full1}
\end{eqnarray} 
and the respective $\overline{\varphi}^{AB}_\mu$ equation reads,
\begin{equation}
\nabla^{AB}\varphi^{BC}_\mu=g\gamma^2f^{ABC}A_\mu^B\;,\label{break2}
\end{equation}
which directly ensures the no zero mode condition for the entire connection functional space.

The presence of the mass parameter is very welcome because it induces a soft BRST symmetry breaking \cite{Baulieu:2008fy,Baulieu:2009xr}. In fact, the BRST variation of \eqref{full1} reads
\begin{equation}
sS=\gamma^2\int d^4x\left[-gf^{ABC}(\overline{\varphi}+\varphi)_\mu^{AB}D^{CD}_{\mu}c^D+gf^{ABC}A^{C}_{\mu}\omega_\mu^{AB}\right]\;,\label{break4}
\end{equation}
which expresses the breaking of the BRST symmetry. The fact that it is a soft breaking arises from the fact that the parameter $\gamma$ always carries non-vanishing mass dimension. This property enforces the breaking to carry a smaller dimension than the spacetime dimension. This is clear from Table \ref{table2}. The fact that the breaking is soft is of great importance to preserve standard Yang-Mills theories at the ultraviolet regime, which means that it will affect only the low energy sector of the model. It is evident then that, in the limit $\gamma\rightarrow0$, the usual perturbative action is recovered by means of \eqref{idexp}. Moreover, in this limit the BRST symmetry is also recovered. It is worth to mention that the symmetries \eqref{q1} and \eqref{r1} are no longer valid while \eqref{t1} remain being obeyed. Furthermore, the fact that \eqref{t1} is still valid is very important because it means that the symmetry between the auxiliary fields and the Faddeev-Popov ghosts is maintained. Thus, the soft breaking term ensures the absence of Faddeev-Popov zero modes. On the other hand, the BRST breaking also implies on the breaking of the $U(d\dim{G})$ and the rigid supersymmetry between the auxiliary fields. Another important remark is that the consistent introduction of the no zero mode condition and the break of the BRST symmetry enforces the introduction of a mass parameter. The BRST breaking term is proportional to this term. So, requirement \ref{req3} is, actually, redundant.

It is crucial to see that, in the soft BRST breaking method \cite{Baulieu:2008fy,Baulieu:2009xr}, the breaking is introduced by a quadratic combination of $A$ and the auxiliary fields so the propagators are modified in a soft manner. This requirement ensures renormalizability and that the ultraviolet sector is not changed, at least at the Landau gauge. Inhere, we have obtained the same type of term by requiring the elimination of Gribov infinitesimal copies through BRST breaking and this requirement naturally led to the soft BRST symmetry breaking.

\subsection{Intermission 1}\label{int1}

It has to be clear that $\nabla$, as an independent operator, will always have zero modes, but the related infinitesimal copies should not contribute to the path integral nor to the solutions of the classical field equations. When we say that the zero modes are avoided, it means that we are eliminating the copies, because these copies are responsible for the appearance of the zero modes. Moreover, the fact that the copies are eliminated solely based on $\overline{\varphi}$ field equation comes from the fact that we are adding an extra constraint to the action. Interpreting the field $\overline{\varphi}$ as a second Lagrange multiplier, the resulting equation expresses the fact that the Faddeev-Popov operator ``has no zero modes". This  equation is actually the extra constraint. Again, we remark that, when $A_\mu^A=0$, zero modes do appear. However, the null gauge configuration, if it has infinitesimal copies, is a copy of another non-vanishing configuration. Thus $A_\mu^A=0$ will be eliminated in favour of one of its copies. Another configuration that can generate ambiguity is $\varphi=0$. In this case, the constraint solution is, again, the trivial vacuum $A_\mu^A=0$, which, from the previous analysis, is harmless.

Let us analyse the classical field equations a bit more. We have, essentially, Yang-Mills theory with two constraints. Let us assume that the gauge fixing is satisfied, $\Delta=0$. Then, we have two equations involving the Faddeev-Popov operator to be analyzed, the ghost field equation $\nabla^{AB}c^B=0$ and the second constraint $\nabla^{AC}\varphi^{CB}=\gamma^2gf^{ABC}A_\mu^C$. Moreover, it is important to keep in mind that the Faddeev-Popov operator depends on the gauge field. Let us suppose, as first situation, that the second constraint is satisfied for a given non-vanishing $\varphi$. Then, if $\nabla\ne 0$, then $A\ne0$. And because the solution of this equation fixes a specific $A$ for which $\nabla\ne0$, the equation for $c$ leads to $c=0$. In contrast, as a second situation, we assume the ghost equation to be satisfied for $c\ne0$. Hence, $\nabla=0$ and the second constraint is satisfied if, and only if, $A=0$. Thus, due to the second constraint, every time that a zero mode appears, the system is thrown to the trivial vacuum configuration $A=0$.

The second situation can be interpreted as the perturbative sector of the theory because we are around the perturbative vacuum $A=0$. At this sector, $\nabla\approx0$ and the ghost field seems to be relevant. In fact, it is widely known that the ghost field is very important on the cancelation of non-physical configurations at the perturbative expansion. Under that situation, since the second constraint becomes a trivial identity ($0=0$), it is equivalent to state that $\gamma^2=0$, making the auxiliary fields actually decouple and recover the perturbative action $S_0$. If we assume that the Faddeev-Popov operator can be written as $\nabla\approx-\partial^2+0(A)\equiv p^2$ and if we associate the perturbative region with the high energy sector, then, we conclude that this is a sector where the ghosts must be massless.

On the other hand, at the first situation, there are no zero modes allowed $\nabla\ne0$, which induces $c=0$. At this region, $A\ne0$. It seems then that the ghost field attain an almost irrelevant role, an evidence that the perturbative expansion does not make sense anymore. As opposed to the second situation, this case can be associated with the non-perturbative sector because both constraints are satisfied for non-trivial configurations.

We have two distinct situations: The ultraviolet region where $S_0$ dominates, the gauge field is around its perturbative vacuum $A=0$ and the zero modes only appear as massless ghosts; and the infrared region where there are no zero modes but the theory is far from the perturbative vacuum and both constraints are simultaneously satisfied for non-trivial configurations.

\section{Beyond the classical level} \label{Qaspects}

For now, we have just employed a general method to get rid of infinitesimal Gribov ambiguities at classical level. The action \eqref{full1} is infinitesimal Gribov copies free and ready for quantization. Thus, the next step is to check its renormalizability. Although the study of renormalizability would require the specification of the gauge fixing, we can infer some universal properties of the quantum version of \eqref{full1}. In what follows, for simplicity, we restrict the model for Landau dimension gauges $\kappa=2$, which allows the presence of the term $\gamma^z$ with $z=4$ and avoids a really cumbersome dimensional analysis. The correspondent adjustments on the dimensions are displayed in Table \ref{table1a}.

\begin{table}[t]
\centering
\begin{tabular}{|c|c|c|c|c|c|c|c|c|}
\hline
fields & $\Delta$ & $b$ & $\bar{c}$ & $\overline{\varphi}$, $\varphi$, $\overline{\omega}$, $\omega$ & $\gamma$ \\ \hline
Dimension & $2$ & $2$ & $2$ & $1$ & $1$\\ \hline
\end{tabular}
\caption{Dimensions of $\kappa$-dependent fields.}
\label{table1a}
\end{table}

The main problem in studying the renormalizability of this action is the BRST breaking, which is non-linear. This means that it contaminates a perturbative expansion of the Ward identities. To deal with renormalizability there are two main methods that can be employed: i.) With the introduction of extra sources that control the BRST symmetry breaking and that, at the end, acquire certain physical values and ii.) by rewriting the Gribov-Zwanziger term in terms of a set of extra auxiliary fields that turn the BRST symmetry breaking into a linear term which remain at classical level \cite{Capri:2010hb}. Both methods have been proven to be efficient in the Landau gauge. However, for the present purposes, the first method is sufficient.

\subsection{Universal Ward identities}\label{rst}

First of all, let us write the Slavnov-Taylor operator,
\begin{equation}
\mathcal{S}(S)=\int d^4x\left(\frac{\delta S}{\delta\Omega^A_\mu}\frac{\delta S}{\delta A^A_\mu}+\frac{\delta S}{\delta L^A}\frac{\delta S}{\delta c^A}+b^A\frac{\delta S}{\delta\overline{c}^A}+\overline{\varphi}^{AB}_\mu\frac{\delta S}{\delta\overline{\omega}^{AB}_\mu}+\omega^{AB}_\mu\frac{\delta S}{\delta\varphi^{AB}_\mu}\right)\;,\label{st}
\end{equation}
which, of course, does not represent a symmetry of the action \eqref{full1},
\begin{equation}
\mathcal{S}(S)=\int d^4x\left(\frac{\delta S_{brst}}{\delta\Omega^A_\mu}\frac{\delta\Xi}{\delta A^A_\mu}+\frac{\delta S_{brst}}{\delta L^A}\frac{\delta\Xi}{\delta c^A}+b^A\frac{\delta\Xi}{\delta\overline{c}^A}+\overline{\varphi}^{AB}_\mu\frac{\delta\Xi}{\delta\overline{\omega}^{AB}_\mu}+\omega^{AB}_\mu\frac{\delta\Xi}{\delta\varphi^{AB}_\mu}\right)\ne0\;.\label{break2a}
\end{equation}
This breaking is not linear on the fields as it can be seen from expression \eqref{break1}. To control this breaking we follow \cite{Zwanziger:1992qr} by introducing a set of external sources which does not affect the cohomology of the model (See Table \ref{table4} for the respective quantum numbers)
\begin{eqnarray}
sU^{AI}_\mu&=&M^{AI}_\mu\;,\nonumber\\
sM^{AI}_\mu&=&0\;,\nonumber\\
sV^{AI}_\mu&=&N^{AI}_\mu\;,\nonumber\\
sN^{AI}_\mu&=&0\;.\label{brs8}
\end{eqnarray}
Thus, we can replace the breaking term by
\begin{equation}
\Xi\longmapsto\Xi_{brst}\;,\label{rest1}
\end{equation}
where
\begin{eqnarray}
\Xi_{brst}&=&s\int d^dx\left[-U^{AI}_\mu (D_\mu\varphi_I)^A+V^{AI}_\mu (D_\mu\overline{\omega}_I)^A+\theta U^{AI}_\mu{V}^{AI}_\mu\right]\nonumber\\
&=&\int d^{4}x\left[-M_\mu^{AI}\left( D_\mu\varphi _I\right)^A-gU_\mu^{AI}f^{ABC}\left(D_\mu c\right)^B\varphi_I^C+U_\mu^{AI}\left(D_\mu\omega_I\right)^B-N_\mu^{AI}\left(D_\mu\overline{\omega}_I\right)^A+\right.\nonumber \\
&-&\left.V_\mu^{AI}\left(D_\mu\overline{\varphi}_I\right)^A+gV_\mu^{AI}f^{ABC}\left(D_\mu c\right)^B\overline{\omega}_I^C+\theta\left(M_\mu^{AI}V_\mu^{AI}-U_\mu^{AI}N_\mu^{AI}\right)\right] \;,\label{rest2}
\end{eqnarray}
where $\theta$ is a dimensionless parameter introduced to absorb vacuum divergences. The original breaking action $\Xi$ is recovered by setting appropriate physical values for the sources. It is easy to see that the most appropriate choice is
\begin{eqnarray}
M_{\mu \nu }^{AB} &=&V_{\mu \nu }^{AB}\;\;=\;\;\gamma ^{2}\delta ^{AB}\delta _{\mu
\nu }\;,  \nonumber \\
U_{\mu \nu }^{ab} &=&N_{\mu \nu }^{ab}\;\;=\;\;0\;.\label{phys1}
\end{eqnarray}
We can see that the new parameter determine the original $\varepsilon$ parameter through
\begin{equation}
\varepsilon=\left(4\dim{G}\right)\theta\;.\label{eps}
\end{equation}

\begin{table}
  \centering
  \begin{tabular}{|c|c|c|c|c|c|c|}
\hline
Source & $U_{\mu }^{AI}$ & $M_{\mu }^{AI}$ & $N_{\mu }^{AI}$ & $V_{\mu }^{AI}$ & $K$ & $J$ 
\\ \hline\hline \textrm{dimension} & $2$ & $2$ & $2$ & $2$ & $2$ & $2$
\\ \hline $\mathrm{ghost number}$ & $-1$ & $0$ & $1$ & $0$ & $-1$ & $0$ \\ \hline
$Q_{f}\textrm{-charge}$ & $-1$ & $-1$ & $1$ & $1$ & $0$ & $0$\\
\hline
\end{tabular}
\caption{Quantum numbers of the auxiliary sources.}\label{table4}
\end{table}

\noindent The full action
\begin{equation}
\Sigma=S_{brst}+\Xi_{brst}\;,\label{fullaction}
\end{equation}
is now BRST invariant. In fact, the Slavnov-Taylor identity \eqref{st} is generalized to
\begin{eqnarray}
\mathcal{S}(S)&=&\int\left(\frac{\delta\Sigma}{\delta\Omega^A_\mu}\frac{\delta\Sigma}{\delta A^A_\mu}+\frac{\delta\Sigma}{\delta L^A}\frac{\delta\Sigma}{\delta c^A}+b^A\frac{\delta\Sigma}{\delta\overline{c}^A}+\overline{\varphi}^{AB}_\mu\frac{\delta\Sigma}{\delta\overline{\omega}^{AB}_\mu}+\omega^{AB}_\mu\frac{\delta\Sigma}{\delta\varphi^{AB}_\mu}+\right.\nonumber\\
&+&\left.M^{AI}_\mu\frac{\delta\Sigma}{\delta U^{AI}_\mu}+N^{AI}_\mu\frac{\delta\Sigma}{\delta V^{AI}_\mu}\right)=0\;.\label{st1}
\end{eqnarray}
Moreover, we also have the restoration of symmetries \eqref{q1}, \eqref{r1} and \eqref{t1}
\begin{eqnarray}
Q_{IJ}\Sigma&=&0\;,\nonumber\\
R_{IJ}\Sigma&=&0\;,\nonumber\\
T_{I}\Sigma&=&0\;,\label{syms}
\end{eqnarray}
where
\begin{eqnarray}
Q_{IJ}&=&\int d^4x\left(\varphi^A_I\frac{\delta}{\delta\varphi^A_J}+\omega^A_I\frac{\delta}{\delta\omega^A_J}-\overline{\varphi}^A_J\frac{\delta}{\delta\overline{\varphi}^A_I}-\overline{\omega}^A_J\frac{\delta}{\delta\overline{\omega}^A_I}+N^{AI}_\mu\frac{\delta\Sigma}{\delta N^{AI}_\mu}+V^{AI}_\mu\frac{\delta\Sigma}{\delta V^{AI}_\mu}+\right.\nonumber\\
&-&\left.M^{AI}_\mu\frac{\delta\Sigma}{\delta M^{AI}_\mu}-U^{AI}_\mu\frac{\delta\Sigma}{\delta U^{AI}_\mu}\right)\;,\nonumber\\
R_{IJ}&=&\int d^4x\left(\varphi^A_I\frac{\delta}{\delta\omega^A_J}-\overline{\omega}^A_J\frac{\delta}{\delta\overline{\varphi}^A_I}+V^{AI}_\mu\frac{\delta\Sigma}{\delta N^{AJ}_\mu}-U^{AJ}_\mu\frac{\delta\Sigma}{\delta M^{AI}_\mu}\right)\;,\nonumber\\
T_{I}&=&\int d^4x\left(c^A\frac{\delta}{\delta\omega^A_I}+\overline{\omega}^A_I\frac{\delta}{\delta\overline{c}^A}+U^{AI}_\mu\frac{\delta\Sigma}{\delta \Omega^A_\mu}\right)\;.\label{ops}
\end{eqnarray}

We point out that, if possible, any other symmetry of the perturbative gauge fixing should also be imposed. Equations \eqref{syms} are universal and are valid for any gauge fixing of the type $\Delta^A(A)=0$ because the difference between the case $\kappa=2$ and the others is, essentially, the pure source term. We also remark that, depending on the chosen gauge, it would be allowed to introduce interacting terms due to the Ward identities. We expect that, in that case, these terms drop out at the ultraviolet limit. We know that at the Landau and maximal Abelian gauges these interacting terms are not allowed\footnote{The maximal Abelian gauge actually requires extra interacting terms. However, these terms originate from the fact that the gauge constraint is non-linear and demand quartic ghost interactions for renormalization purposes. However, this generalized maximal Abelian gauge is a more general gauge which does not respect our restriction of gauges depending exclusively on $A_\mu^A$.}.

\subsection{The gap equation}\label{gpeq}

It remains to determine $\gamma$ in a consistent way. We have introduced the Gribov parameter in order to eliminate the zero modes of the Faddeev-Popov operator and, until now, this parameter is free. Its presence is already sufficient to eliminate the infinitesimal Gribov copies, as long as it is non-vanishing. There are several ways to determine the value of this parameter. The traditional method is demanding the quantum action $\Gamma$ to obey the renormalization group equation. However, following \cite{Zwanziger:1992qr}, one could also require that the quantum action depends minimally on the Gribov parameter. In this way, the value of $\gamma$ would be that one which minimizes the vacuum energy. This is a quite natural requirement since everything in Nature flows to the minimum energy configuration. Thus, 
\begin{equation}
\frac{\delta\Gamma}{\delta\gamma^2}=0\;.\label{gap1}
\end{equation}
This gap equation provides a non-perturbative condition to determine the explicit value of $\gamma$. Equation \eqref{gap1} is equivalent to
\begin{equation}
gf^{ABC}\left<(\varphi+\overline{\varphi})^{AB}_\mu A^{C}_{\mu}\right>=-2\varepsilon\gamma^2\label{gap2}
\end{equation}
which establishes that the Gribov parameter is associated to the condensation of the local composite operator $(\varphi+\overline{\varphi})^{AB}_\mu A^{C}_{\mu}$. In usual constructions \cite{Gribov:1977wm,Zwanziger:1992qr}, the gap equation pushes the theory very close to the Gribov first horizon. Thus, the minimization of the free energy \eqref{gap1} with respect to the Gribov parameter is responsible to ensure that the theory is driven to a non-perturbative sector. Typically, a gap equation of the form \eqref{gap1}, provides $\gamma^2\sim\exp\{-1/g^2\}$, which is a standard non-perturbative behavior.

\section{Extended formulation and alternative gap equation} \label{Gengape}

The method developed so far has reproduced the Gribov-Zwanziger approach for a generalized class of gauges. However, we have chosen a minimal alteration option with respect to the perturbative action. Any other possible dimension 2 local composite operator can be considered via the LCO technique. This improvement leads to the so called refined Gribov-Zwanziger approach \cite{Dudal:2011gd} and is very important because it produces propagators that agree with lattice predictions \cite{Cucchieri:2011ig}. 

Nevertheless, it is possible to be a bit more permissive in our construction of $\Xi$ and accommodate terms equivalent to the dimension 2 operators of the refined Gribov-Zwanziger approach. For instance, let us maintain Requirement \ref{req3a} and the fact that we are avoiding higher derivative terms. Moreover, BRST symmetry is not required for this term, only Faddeev-Popov discrete symmetry is. This notwithstanding, if we can put any term in a BRST exact form, that would be better. We demand that all extra terms depend on $\gamma$ because it ensures the ultraviolet perturbative limit to be recovered for $\gamma\rightarrow 0$. As usual, we also require that these terms are, obviously, Lorentz and colour\footnote{In certain cases, \emph{e.g.} the maximal Abelian gauge, the colour invariance has to be treated in a different way. See Sect.~\ref{MAG}} invariants. Another important condition is that terms depending on $b^A$ are excluded because it would change the gauge fixing,
\begin{equation}
\frac{\delta\widetilde{\Xi}}{\delta b^A}=0\;.
\end{equation}
We may then consider many terms coupled to the mass parameter $\gamma$ (or other mass parameter - we opt to keep only one extra parameter). Thus, it is not difficult to realize that an extra term can be added to $S$, namely
\begin{eqnarray}
\widetilde{\Xi}&=&s\int d^4x\;\gamma^2\zeta_1\overline{\omega}^{AB}_\mu\varphi^{AB}_\mu+\int d^4x\;\gamma^2\left(\zeta_2A_\mu^AA_\mu^A+\zeta_3\overline{c}^Ac^A\right)\;,\nonumber\\
&=&\int d^4x\;\gamma^2\zeta_1\left(\overline{\varphi}^{AB}_\mu\varphi^{AB}_\mu-\overline{\omega}^{AB}_\mu\omega^{AB}_\mu\right)+\int d^4x\;\gamma^2\left(\zeta_2A_\mu^AA_\mu^A+\zeta_3\overline{c}^Ac^A\right)\;.\label{break3a}
\end{eqnarray}
where the dimensionless parameters $\zeta_i$ are introduced in order to absorb eventual divergences. Eventually, these parameter may be fixed by the Ward identities, perhaps ruling (some of) them out. The gap equation \eqref{gap1} now provides a different equation for $\gamma$,
\begin{equation}
gf^{ABC}\left<(\varphi+\overline{\varphi})^{AB}_\mu A^{C}_{\mu}\right>+\zeta_1\left<\overline{\varphi}^{AB}_\mu\varphi^{AB}_\mu-
\overline{\omega}^{AB}_\mu\omega^{AB}_\mu\right>+\zeta_2\left<A_\mu^AA_\mu^A\right>+\zeta_3\left<\overline{c}^Ac^A\right>=-2\varepsilon\gamma^2\label{gap2a}
\end{equation}

We remark that, in \cite{Dudal:2011gd}, when considering the operator $\sigma=\overline{\varphi}^{AB}_\mu\varphi^{AB}_\mu-
\overline{\omega}^{AB}_\mu\omega^{AB}_\mu$ through the LCO formalism at the Landau gauge, it is found that $<\sigma>\propto\gamma^2$. Inhere, the term $\sigma$ is already proportional to $\gamma^2$, which is the reason why the gap equation \eqref{gap2} is modified to \eqref{gap2a}. Moreover, there are two possible extra terms that also contributes to the new gap equation and the solution of $\gamma$, the gluon and ghost condensates. This modification is very welcome because it prevents the theory to be thrown right at the Gribov horizon. In fact, the first Gribov horizon is characterized as the set of configurations which have infinitesimal copies. When applying the no-pole condition \cite{Gribov:1977wm,Zwanziger:1992qr,Dudal:2005na} to the ghost propagator at the thermodynamic limit, the theory is thrown right at the horizon instead of considering configurations near the horizon. The gap equation \eqref{gap2a} describes a deformation of the no-pole condition for which the operator $\left<\overline{\varphi}^{AB}_\mu\varphi^{AB}_\mu-
\overline{\omega}^{AB}_\mu\omega^{AB}_\mu\right>$ is responsible for a kind of deformation of the horizon. In the case of the refined Gribov-Zwanziger approach, this deformation is obtained by considering the condensation of $\sigma$ in an independent way of the gap equation and, eventually, it is actually found that $<\sigma>\propto\gamma^2$ (see also \cite{Vandersickel:2011zc}). We also remark that, different masses can be considered as $m^2_i=\gamma^2\zeta_i$, which may be an evidence of the equivalence between the two approaches, except for the gap equation.

Obvioulsy, the new terms can also be included into a BRST symmetric action by the introduction of extra sources which assume specific physical values, see the dimension 2 sources in Table \ref{table4}. However, renormalizability must be checked.

\subsection{Intermission 2}\label{int2}

Now, let us take a look at the path integral. In the usual approach originally developed by Gribov \cite{Gribov:1977wm}, improved by Zwanziger \cite{Zwanziger:1989mf} and refined by Dudal \emph{et al.} \cite{Dudal:2008sp}, the starting point is the fact that the Faddeev-Popov operator is hermitian, which allows a restriction in the domain of integration at the path integral. This is done by introducing a Heaviside function which turns into a delta function at a suitable thermodynamic limit. With the present approach, we perform the introduction of an extra constraint with the BRST method. Then, the improved action is quantized when exponentiation into the path integral is performed. In fact, we can go backwards from the path integral constructed with the action\footnote{See below the discussion within the alternative formulation.} \eqref{full1} and recover that what would be a generic case of the usual approaches. 

Our path integral is
\begin{equation}
Z=\int D\Phi e^{-S}\;,\label{pi1}
\end{equation}
where $D\Phi$ is the functional volume element with respect to all fields. Then, integrating the auxiliary fields $\overline{\varphi}$, $\varphi$, $\overline{\omega}$ and $\omega$, it is achieved
\begin{equation}
Z=\int DAD\overline{c}DcDb e^{-S_0-S_{GF}-S_H-\int d^4x\;\epsilon\gamma^4}\;,\label{pi2}
\end{equation}
where
\begin{equation}
S_H=\gamma^4g^2\int f^{ABC}A^C_\mu\left(\frac{1}{\nabla}\right)^{BD}f^{ADE}A^E_\mu\;.\label{hfunc1}
\end{equation}
is the horizon function equivalent. And now integrating over $b$ and the ghosts we have
\begin{equation}
Z=\int DA \det(\nabla)\delta(\Delta) e^{-S_0-S_H-\int d^4x\;\epsilon\gamma^4}\;.\label{pi3}
\end{equation}
The exponential on $\gamma$ can be eliminated by integrating over all possible values of $\gamma$. This is the inverse operation of the saddle point. It is equivalent to consider the contribution of all possible values of $\gamma$ with their respective weights. In practice, this is equivalent to assume the gap equation. Using
\begin{equation}
\int \frac{d\gamma^4}{2\pi i}e^{-\gamma^4\int d^4x\epsilon-S_H}=\delta\left(\int d^4x\epsilon+S_H\right)\;,\label{delta}
\end{equation}
we find
\begin{equation}
Z=\int DA \det(\nabla)\delta(\Delta) \delta\left(S_H+\int d^4x\epsilon\right)e^{-S_0}\;.\label{pi4}
\end{equation}
The second delta function states that the horizon function argument is fixed by the value $\epsilon$, which is finite. Thus,
\begin{equation}
f^{ABC}A^C_\mu\left(\frac{1}{\nabla}\right)^{BD}f^{ADE}A^E_\mu=-\epsilon\;.\label{hfunc2}
\end{equation}
This constraint is in complete agreement with our analysis of the classical fields equations discussed in Sect.~\ref{int1}. When $\nabla=0$, the only way to maintain $\epsilon$ is to take $A=0$ as well. This constraint actually regulates the gauge configurations and the values of the Faddeev-Popov operator in such a way that \eqref{hfunc2} is fixed through $\epsilon$ and the Faddeev-Popov operator is non-vanishing. It is interesting to see that, depending on the value of $\epsilon$, we can be very close to the analogue of the Gribov horizon. Particularly, at the Landau gauge, we are right at the horizon, which is an apparent paradox because the horizon is the place where all infinitesimal copies reside. This apparent paradox is solved with the introduction of dimension-2 condensates as demonstrated in the refined version of the Gribov-Zwanziger approach.

Another interesting point occurs when we consider the alternative formulation in the beginning of Sect.~\ref{Gengape}. For consistency\footnote{If the gauge fixing accepts the ghost mass term, even the Faddeev-Popov operator is altered. However, in this case, we cross the limits of this approach because, when restoring the BRST symmetry, this term will naturally generate a $b$-dependent term, modifying the gauge fixing itself. Thus, perhaps this term could only be considered through the LCO formalism.}, let us avoid the ghost mass term. In that case, a kind of deformation of the horizon takes place. Actually, we have a delta function which is deformed with respect to the horizon. A more formal analysis of this deformation is beyond the scope of the present work.

\section{Testing the method} \label{tests}

We now provide some tests for the model. Specifically, we apply it for the Landau and maximal Abelian gauges. In the first case, it is almost trivial to see that the method coincides with the well-known results. In the case of the maximal Abelian gauge, because the Abelian sector does not develop copies, only the non-Abelian sector requires the breaking. We restrict ourselves to the classical level and to the simpler case of \eqref{full1}, otherwise, renormalization considerations should also be taken into account.

\subsection{Landau gauge}

The Landau gauge fixing is given by
\begin{equation}
\Delta^A=\partial_\mu A_\mu^A\;,\label{landau1}
\end{equation}
which provides for the perturbative action
\begin{equation}
S_0=S_{\mathrm{YM}}+\int d^4x\left(ib^{A} \partial_\mu A_\mu^A + \overline{c}^A\partial_\mu D^{AB}_\mu c^B\right)+S_{\mathrm{ext}}\;.\label{landau2}
\end{equation}
From the BRST variation of equation (\ref{landau1}), and the imposition of BRST invariance, we can regain the Faddeev-Popov operator, which is
\begin{equation}
\nabla^{AB} = \partial_{\mu}D^{AB}_{\mu}\; . \label{landau3}
\end{equation}
Thus, from a simple substitution of \eqref{landau3} in \eqref{gz0}, we obtain the following action
\begin{equation}
S_{\mathrm{G}} =  \int d^{4}x\left[\overline{\varphi}^{AC}_{\mu}\partial_{\nu}D^{AB}_{\nu}\varphi^{BC}_{\mu} - \overline{\omega}^{AC}_{\mu}\partial_{\nu}D^{AB}_{\nu}\omega^{BC}_{\mu} + g\left(\partial_{\nu}\overline{\omega}^{AC}_{\mu}\right)f^{ABD}D^{DE}_{\nu}c^{E}\varphi^{BC}_{\mu}\right]\; , \label{landau4}
\end{equation}
The breaking term \eqref{break1} is given by
\begin{equation}
\Xi = \int d^4x \left[\gamma^{2}D_\mu^{AB}(\varphi+\overline{\varphi})^{AB}_{\mu}+\varepsilon\gamma^4\right].
\label{landau5}
\end{equation}
So, the complete classical action is written as
\begin{equation}
S = S_0 + S_{G} + \Xi\;,\label{landau6}
\end{equation}
which agrees with the well-known result for $\varepsilon=4\dim{G}$.

\subsection{Maximal Abelian gauge}\label{MAG}

In the case of the maximal Abelian gauge, we have first to set our conventions and notation and restrict ourselves, obviously, to $SU(N)$ gauge theories. In this case, the gauge field $A_{\mu}$ is decomposed into diagonal (Abelian) and off-diagonal parts as
\begin{equation}
A_{\mu} = A_{\mu}^{A}\Lambda^{A} = A_{\mu}^{a}\Lambda^{a} + A_{\mu}^{i}\Lambda^{i},
\label{adecomp}
\end{equation}
where $\Lambda^a$ are the off-diagonal sector of generators and $\Lambda^i$ the Abelian generators. Since all generators $\Lambda^{i}$ commute with each other, they generate the Cartan subgroup of $SU(N)$. To avoid confusion, we have to keep in mind that capital indices $\left\{A, B, C,\ldots \right\}$ are related to the entire $SU(N)$ group, and so, they run in the set $\left\{1, \ldots, (N^{2} - 1)\right\}$. Small indices $\left\{a, b, c, \ldots h\right\}$ represent the off-diagonal part of $SU(N)$ and they vary in the set $\left\{1, \ldots, N(N-1)\right\}$. Finally, small indices $\left\{i,j,k, \ldots\right\}$ describe the Abelian part of $SU(N)$ and they run in the range $\left\{1, \ldots, (N-1)\right\}$. From the $SU(N)$ Lie algebra, we can write the following decomposed algebra
\begin{eqnarray}
\left[\Lambda^{a},\Lambda^{b}\right] &=&f^{abc}\Lambda^{c} + f^{abi}\Lambda^{i}, \nonumber \\
\left[\Lambda^{a},\Lambda^{i}\right] &=& -f^{abi}\Lambda^{b}, \nonumber \\
\left[\Lambda^{i},\Lambda^{j}\right] &=& 0.
\label{algmag}
\end{eqnarray}
Using Jacobi identity, we can show that the structure constants satisfy
\begin{eqnarray}
f^{abi}f^{bcj} + f^{abj}f^{bic} &=& 0, \nonumber \\
f^{abc}f^{cdi} + f^{adc}f^{cib} + f^{aic}f^{cbd} &=& 0, \nonumber \\
f^{abc}f^{cde} + f^{abi}f^{ide} + f^{adc}f^{ceb} + f^{adi}f^{ieb} + f^{aec}f^{cbd} + f^{aei}f^{ibd} &=& 0.
\label{jacobimag}
\end{eqnarray}
Proceeding in this way, we can write the off-diagonal and diagonal components of a infinitesimal gauge transformation with parameter $\alpha$ respectively
\begin{eqnarray}
\delta_{g}A_{\mu}^{a} &=& -(D_{\mu}^{ab}\alpha^{b} + gf^{abc}A_{\mu}^{b}\alpha^{c} + gf^{abi}A_{\mu}^{b}\alpha^{i}), \nonumber \\
\delta_{g}A_{\mu}^{i} &=& - (\partial_{\mu}\alpha^{i} + gA_{\mu}^{a}\alpha^{b}),
\label{gaugetransmag}
\end{eqnarray}
where the covariant derivative $D_{\mu}^{ab}$ is defined with respect to the abelian component of the gauge field, \textit{i.e.}
\begin{equation}
D_{\mu}^{ab} = \delta^{ab}\partial_{\mu} - gf^{abi}A_{\mu}^{i}.
\label{cderivative}
\end{equation}
Following the same decomposition, the Yang-Mils action reads
\begin{equation}
S_{\mathrm{YM}} = \frac{1}{4}\int d^{d}x(F_{\mu \nu}^{a}F_{\mu \nu}^{a} + F_{\mu \nu}^{i}F_{\mu \nu}^{i}),
\label{ymactionmag}
\end{equation}
with
\begin{eqnarray}
F_{\mu \nu}^{a} &=& D_{\mu}^{ab}A_{\nu}^{b} - D_{\nu}^{ab}A_{\mu}^{b} + gf^{abc}A_{\mu}^{b}A_{\nu}^{c}, \nonumber \\
F_{\mu \nu}^{i} &=& \partial_{\mu}A_{\nu}^{i} - \partial_{\nu}A_{\mu}^{i} + gf^{abi}A_{\mu}^{a}A_{\nu}^{b}.
\label{fmunu}
\end{eqnarray}

The decomposed BRST transformations are easily obtained from \eqref{brs1}, \eqref{brs2} and \eqref{brs3}. For the off-diagonal fields we have,
\begin{eqnarray}
sA_{\mu}^{a} &=& -(D_{\mu}^{ab}c^{b} + gf^{abc}A_{\mu}^{b}c^{c} + gf^{abi}A_{\mu}^{b}c^{i}), \nonumber \\
sc^{a} &=& gf^{abi}c^{b}c^{i} + \frac{g}{2}f^{abc}c^{b}c^{c}, \nonumber \\
s\overline{c}^{a} &=& ib^{a}, \nonumber \\
sb^{a} &=& 0,
\label{brstoff}
\end{eqnarray}
and for the Abelian sector,
\begin{eqnarray}
sA_{\mu}^{i} &=& -(\partial_{\mu}c^{i} + gf^{abi}A_{\mu}^{a}c^{b})\;, \nonumber \\
sc^{i} &=& \frac{g}{2}f^{abi}c^{a}c^{b}\;, \nonumber \\
s\overline{c}^{i} &=& ib^{i}\;, \nonumber \\
sb^{i} &=& 0\;.
\label{brstabelian}
\end{eqnarray}

Now, we introduce the gauge conditions that characterize the maximal Abelian gauge. The maximal Abelian gauge condition is obtained by fixing the non-Abelian sector in a Cartan subgroup covariant way,
\begin{equation}
D_{\mu}^{ab}A_{\mu}^{b}=0\;.\label{magcond}
\end{equation}
This condition maintain the Abelian gauge symmetry, which is usually fixed by the Landau condition:
\begin{equation}
\partial_{\mu}A_{\mu}^i=0\;.
\label{magcond2}
\end{equation}

By performing the BRST variations of conditions \eqref{magcond} and \eqref{magcond2} and imposing their invariance we obtain the Gribov copies equations
\begin{eqnarray}
\nabla^{ab}c^b&=&0\;,\nonumber\\
\partial_\mu\left(\partial_\mu c^i+gf^{abi}A_\mu^a c^b\right)&=&0\;,\label{gribovmag1}
\end{eqnarray}
where
\begin{equation}
\nabla^{ab} = D_{\mu}^{ac}D_{\mu}^{cb} + gf^{acd}A_{\mu}^{c}D_{\mu}^{db} + g^{2}f^{aci}f^{bdi}A_{\mu}^{c}A_{\mu}^{d}.
\label{fpmag}
\end{equation}
is recognized as the Gribov operator in the maximal Abelian gauge. The second of \eqref{gribovmag1} is actually redundant because it can be easily solved once the first equation is solved \cite{Capri:2010an}. Thus, if we ruin the first of \eqref{gribovmag1}, the second is automatically spoiled.  Let us start with the $SU(2)$ case.

\subsubsection{The $SU(2)$ case}

In the case of $SU(2)$, there is only one Abelian generator and it will be denoted with no group index, for instance, for the Abelian gluon field $A_{\mu}$. For the off-diagonal components we will write $A^{a}_{\mu}$, with $a=1,2$ and the covariant derivative is defined by $D^{ab}_{\mu} = \partial_{\mu} \delta^{ab} - g\epsilon^{ab}A_{\mu}$, where $ f^{ab3}\equiv\epsilon^{ab3}\equiv\epsilon^{ab}$. The perturbative action is given by
\begin{equation}
S_0 = S_{\mathrm{YM}} + \int d^dx(ib^{a}D^{ab}_{\mu}A^{b}_{\mu} + \overline{c}^{a}\nabla^{ab}c^{b} + g\epsilon^{ab}\overline{c}^{a}cD^{bc}_{\mu}A^{c}_{\mu} + ib\partial_{\mu}A_{\mu} + \overline{c}\partial_{\mu}(\partial_{\mu}c+g\epsilon^{ab}A^{a}_{\mu}c^{b})) + S_{\mathrm{ext}},
\label{magpert}
\end{equation}
where the Faddeev-Popov operator \eqref{fpmag} is written as
\begin{equation}
\nabla^{ab} = D^{ac}_{\mu}D^{cb}_{\mu} + g^{2}\epsilon^{ac}\epsilon^{bd}A^{c}_{\mu}A^{d}_{\mu},
\label{fpsu2}
\end{equation}
It is important to highlight that this operator does not have any abelian component because of the redundancy on the second equation in \eqref{gribovmag1}. Thus, the trivial extra action \eqref{gz0} is considered only for the operator \eqref{fpsu2}. This means that the auxiliary fields carry an off-diagonal index to contract with $\nabla^{ab}$ and the composite index $(a,\mu)$ to account for the degeneracy of equations \eqref{gribovmag1}. Thus, \eqref{gz0} turns into
\begin{equation}
S_{triv} =  \int d^4x \left[ \overline{\varphi}^{ac}_{\mu}\nabla^{ab}\varphi^{bc}_{\mu} - \overline{\omega}^{ac}_{\mu}\nabla^{ab}\omega^{bc}_{\mu} - \overline{\omega}^{ac}_{\mu}(sA^{d}_{\nu})\frac{\delta \nabla^{ab}}{\delta A^{d}_{\nu}}\varphi^{bc}_{\mu} - \overline{\omega}^{ac}_{\mu}(sA_{\nu})\frac{\delta \nabla^{ab}}{\delta A_{\nu}}\varphi^{bc}_{\mu}\right],
\label{sgsu2mag}
\end{equation}
which, by performing the appropriate calculations, provides
\begin{eqnarray}
S_{triv} &=&  \int d^4x \left\{\overline{\varphi}^{ac}_{\mu}\nabla^{ab}\varphi^{bc}_{\mu} - \overline{\omega}^{ac}_{\mu}\nabla^{ab}\omega^{bc}_{\mu} -\overline{\omega}^{ac}_{\mu}\left[2g\epsilon^{ad}(\partial_{\nu}c + g\epsilon^{ef}A^{e}_{\nu}c^{f})D^{db}_{\nu} + \right.\right.\nonumber\\ 
&+& \left.\left.g\epsilon^{ab}\partial_{\nu}(\partial_{\nu}+g\epsilon^{ef}A^{e}_{\nu}c^{f})-g^{2}(\epsilon^{ad}\epsilon^{be} + \epsilon^{ae}\epsilon^{bd})(D^{dg}_{\nu}c^{g} + g\epsilon^{dg}A^{g}_{\nu}c)A^{e}_{\nu}\right]\varphi^{bc}_{\mu}\right\}\;.\label{su2}
\end{eqnarray}

The breaking term is constructed as in Sect.~\ref{brk}. The difference is that the auxiliary fields are exclusively non-Abelian. Thus, to couple the bosonic auxiliary fields with the gauge field, we have only one possibility which is to couple them with the Abelian sector,
\begin{equation}
\Xi = \int d^4x \left[g\gamma^{2}D^{ab}_\mu(\varphi+\overline{\varphi})^{ab}_{\mu}+\varepsilon \gamma^4\right],
\label{breakmag}
\end{equation}
and the complete action is consistent with the known result \cite{Capri:2005tj}. In the same way of the Landau gauge, the parameter $\varepsilon$ must be fixed through renormalization considerations.

\subsubsection{The $SU(N)$ case}

The procedure for this case is completely analogous to the one made for the $SU(2)$ case. However, now, instead of one Abelian generator we have $N-1$ Abelian generators. The perturbative action is given by \cite{Capri:2010an}
\begin{eqnarray}
S_0 &=& S_{\mathrm{YM}} + \int d^4x \left[ib^{a}D^{ab}_{\mu}A^{b}_{\mu} + \overline{c}^{a}\nabla^{ab}c^{b} - gf^{abc}(D^{ad}_{\mu}A^{d}_{\mu})\overline{c}^{b}c^{c} \right.\nonumber \\
&-& \left.gf^{abi}(D^{ac}_{\mu}A^{c}_{\mu})\overline{c}^{b}c^{i} + ib^{i}\partial_{\mu}A^{i}_{\mu} + \overline{c}^{i}\partial_{\mu}(\partial_{\mu}c^{i} + gf^{abi}A^{a}_{\mu}c^{b})\right] + S_{\mathrm{ext}}
\label{pertmag}
\end{eqnarray}
The trivial term is constructed in the same way of the $SU(2)$ case, except that the Faddeev-Popov operator is the general one \eqref{fpmag},

\begin{eqnarray}
S_{triv} &=& \int d^4x \left\{\overline{\varphi}^{ac}_{\mu}\nabla^{ab}\varphi^{bc}_{\mu} - \overline{\omega}^{ac}_{\mu}\nabla^{ab}\omega^{bc}_{\mu} - \overline{\omega}^{ac}_{\mu}\left[2gf^{aek}(\partial_{\nu}c^{k} + gf^{ghk}A^{g}_{\nu}c^{h})D^{eb}_{\nu}\right.\right.\nonumber \\
&+& \left.\left.gf^{abk}\partial_{\nu}(\partial_{\nu}c^{k} + gf^{ghk}A^{g}_{\nu}c^{h}) + g^{2}f^{aef}f^{fbk}A^{e}_{\nu}(\partial_{\nu}c^{k}+gf^{ghk}A^{g}_{\nu}c^{h}) \right.\right.\nonumber \\
&-& \left.\left.gf^{adf}(D^{dg}_{\nu}c^{g} + gf^{dgh}A^{g}_{\nu}c^{h} - gf^{dgj}A^{g}_{\nu}c^{j})D^{fb}_{\nu} \right.\right.\nonumber \\
&-& \left.\left.g^{2}(f^{adi}f^{bei} + f^{aei}f^{bdi})A^{e}_{\nu}(D^{dg}c^{g} + gf^{dgh}A^{g}_{\nu}c^{h} + gf^{dgj}A^{g}_{\nu}c^{j}) \right]\varphi^{bc}_{\mu}\right\},
\label{sgmag}
\end{eqnarray}

The breaking term can also be determined as in Sect.~\ref{brk}. Besides a term of the form $D(\varphi+\overline{\varphi})$ as in \eqref{breakmag}, we can also introduce a term like $f^{abc}A^a(\varphi+\overline{\varphi})^{bc}$, which is absent in the $SU(2)$ case. It follows that
\begin{equation}
\Xi = \int d^4x \left[\gamma^2\left(D_\mu^{ab}+\xi gf^{abc}A^{c}_{\mu}\right)(\varphi+\overline{\varphi})^{ab}_{\mu}+\varepsilon \gamma^4\right].
\label{ximag}
\end{equation}
The new parameter $\xi$ is dimensionless and is introduced to account for possible divergences. Once again, the complete action coincides with the known result \cite{Capri:2010an}. Moreover, it is easy to verify that expressions \eqref{sgmag} and \eqref{ximag} reduce to \eqref{su2} and \eqref{breakmag} for $N=2$.

\section{Conclusions}

We have developed a new method that eliminates infinitesimal Gribov ambiguities. In this method, infinitesimal Gribov ambiguities are eliminated through the introduction of an extra constraint. Strictly speaking, what are eliminated are the gauge configurations that leads the Faddeev-Popov operator to develop zero modes. It was shown that, this constraint must eliminate the Faddeev-Popov operator zero modes and, simultaneously, break the BRST symmetry in a soft manner. The method generalizes the known approaches to treat Gribov ambiguities at the Landau and maximal Abelian gauges. However, the method can be sistematically generalized to other gauges, as long as they depend exclusively on the gauge field. We have also shown that some of the Ward identities that rises from the auxiliary fields are universal, in the sense that they can always be defined at the class of gauges we are considering. Finally, as a consistency check, we have applied the method for the Landau and maximal Abelian gauges. The results are compatible with the standard results encountered in the literature. Moreover, the method developed is compatible with the soft breaking of BRST symmetry technique. In this method, the motivation is to achieve the confining propagators of the refined Gribov-Zwanziger approach at the Landau gauge. Here, we have directly demanded the elimination of the copies and the resulting action is the Gribov-Zwanziger action. 

A few words must be said about the method developed: our goal was to ruin the copies equation directly, which means that we wanted to avoid the zero modes of the Faddeev-Popov operator. Since we do not want to modify the anti-ghost equation of motion, we introduced a set of auxiliary fields and a BRST breaking term in order to impose a new constraint into the theory. In this way, the ultraviolet sector remains untouched. This constraint is exactly the equation of motion for $\bar{\varphi}$ which can be visualized as an eigenvalue equation for the Faddeev-Popov operator. The zero eingenvalues appear only at $A=0$, which is a ``safe point'' at the gauge configuration space, \emph{i.e.}, it is a region which is identified with the perturbative sector of the theory. The $\overline{\varphi}$ equation must be faced as the new constraint; the equation that eliminates the configurations where the Faddeev-Popov operator develops zero modes. The extra constraint induces only gauge configurations for which the Faddeev-Popov operator does not have zero modes. Hence, only these configurations will appear at the ghost sector as well. Finally, the action we obtain is the same obtained in the usual approaches for the Landau and maximal Abelian gauges, a fact that stands in favour of the method. In both cases, the ghost propagator will be positive-definite, \emph{i.e.}, the Faddeev-Popov operator is positive-definite. This is exactly the restriction to the first Gribov region and means that this restriction can be viewed as consequence of our method when applied to these gauges. In the case of non-hermitian Faddeev-Popov operators it is very difficult to say something about a geometric region at the functional gauge space. However, a similar delta function is found (see \eqref{hfunc2}) stating that the horizon function analogue is also fixed to a ``region'' fixed by the constant $\epsilon$.

It is worth mention that, the Gribov-Zwanziger approach, or its refined version, depends on the fact that the Faddeev-Popov operator is hermitian. Although we have not applied the method to a different gauge other than the Landau and maximal Abelian gauges, our method applies to gauges where the Faddeev-Popov operator is non-hermitian. The reason is that we have not demanded hermiticity of the Faddeev-Popov operator in any way. We left for future investigation the employment of the technique to gauges where the Faddeev-Popov operator is non-hermitian. Specifically, it is already under investigation \cite{work1} one particular interesting gauge which interpolates between the Landau and maximal Abelian gauges \cite{Dudal:2005zr,Capri:2005zj,Capri:2006bj}. As important issues to be investigated, we have the generalization of the technique to gauges that also depend on other fields such as the Curci-Ferrari gauge \cite{Curci:1976kh}, the renormalizable maximal Abelian gauge and others and the study of the effects that the generalized gap equation presented here could bring to the theory.

\section*{Acknowledgements}

The authors are grateful to M.~A.~L.~Capri, M.~S.~Guimar\~aes, D.~Dudal, S.~A.~Dias and L.~Bonora for very useful discussions. The Conselho Nacional de Desenvolvimento Cient\'{i}fico e Tecnol\'{o}gico\footnote{ RFS is a level PQ-2 researcher under the program \emph{Produtividade em Pesquisa}, 308845/2012-9.} (CNPq-Brazil), The Coordena\c c\~ao de Aperfei\c coamento de Pessoal de N\'ivel Superior (CAPES) and the Pr\'o-Reitoria de Pesquisa, P\'os-Gradua\c c\~ao e Inova\c c\~ao (PROPPI-UFF) are acknowledge for financial support.

\end{document}